\newtheorem{dfn}{Definition}
\newtheorem{thm}{Theorem}
\newtheorem{lem}{Lemma}
\newcommand{\todo}[1]{\marginpar{\scriptsize\textcolor{red}{#1}}}
\newcommand{\rem}[1]{{\scriptsize{{\textcolor{magenta}{[#1]}}}}}
\renewcommand{\todo}[1]{}
\renewcommand{\rem}[1]{}
\renewcommand{\note}[1]{}
\newcommand{\marek}[1]{{{\textcolor{blue}{#1}}}}
\renewcommand{\marek}[1]{}
\newcommand{\false}{\ensuremath{\mathit{false}}}
\newcommand{\true}{\ensuremath{\mathit{true}}}
\newcommand{\ite}{\ensuremath{\mathbf{ite}}}
\newcommand{\var}[1]{{\tt #1}}
\newcommand{\sym}[1]{\ensuremath{\underline{#1}}}
\newcommand{\prm}[1]{\ensuremath{\llbracket #1 \rrbracket}}
\newcommand{\ese}[1]{\ensuremath{\langle #1 \rangle}}
\newcommand{\bld}[1]{\ensuremath{\boldsymbol{#1}}}
\newcommand{\compose}{\ensuremath{\diamond}}
\newcommand{\KwInSep}{ \\ \hspace{1cm} }
\newcommand{\lmark}[1]{\nlset{#1~~~~}}
\newcommand{\aset}{\ensuremath{\longleftarrow}}
\newcommand{\sat}{\texttt{satisfiable}}
\begin{document}


\frontmatter

\title{Compact Symbolic Execution\thanks{This is a full version of the paper
    accepted to ATVA 2013.}}
\author{Jiri Slaby \and Jan Strej\v{c}ek \and Marek Trt\'{\i}k}

\institute{Masaryk University, Brno, Czech Republic\\
  \email{\{slaby,strejcek,trtik\}@fi.muni.cz}
}

\maketitle

\begin{abstract}
  We present a generalisation of King's symbolic execution technique called
  \emph{compact symbolic execution}. It proceeds in two steps.  First, we
  analyse cyclic paths in the control flow graph of a given program,
  independently from the rest of the program. Our goal is to compute a so
  called \emph{template} for each such a cyclic path. A template is a
  declarative parametric description of all possible program states, which
  may leave the analysed cyclic path after any number of iterations along
  it. In the second step, we execute the program symbolically with the
  templates in hand. The result is a \emph{compact} symbolic execution
  tree. A compact tree always carry the same information in all its leaves
  as the corresponding classic symbolic execution tree. Nevertheless, a
  compact tree is typically substantially smaller than the corresponding
  classic tree. There are even programs for which compact symbolic execution
  trees are finite while classic symbolic execution trees are infinite.
\end{abstract}






\section{Introduction} \label{sec:Introduction}

Symbolic execution~\cite{Kin76,How77} is a program analysis method
originally suggested for enhanced testing. While testing runs a program on
selected input values, symbolic execution runs the program on symbols that
represent arbitrary input values. As a result, symbolic execution explores
all execution paths. On one hand-side, this means that symbolic execution
does not miss any error. On the other hand-side, symbolic execution applied
to real programs hardly ever finishes as programs typically have a huge (or
even infinite) number of execution paths. This weakness of symbolic
execution is known as \emph{path explosion problem}. The second weakness of
symbolic execution comes from the fact that it calls SMT solvers to decide
which program paths are feasible and which are not. The SMT queries are
often formulae of theories that are hard to decide or even
undecidable. Despite the two weaknesses, there are several successful
bug-finding tools based on symbolic execution, for example
\textsc{Klee}~\cite{CDE08}, \textsc{Exe}~\cite{Cad08},
\textsc{Pex}~\cite{TdH08}, or \textsc{Sage}~\cite{GLM08}.

This paper introduces the \emph{compact symbolic execution} that partly
solves the path explosion problem.  We build on the observation that one of
the main sources of the problem are program cycles.  Indeed, many execution
paths differ just in numbers of iterations along program cycles. Hence,
before we start symbolic execution, we detect cyclic paths in the control
flow graph of a given program and we try to find a \emph{template} for each
such a cyclic path.  A template is a declarative parametric description
(with a single parameter $\kappa$) of all possible program states produced
by $\kappa\ge 0$ iterations along the cyclic path followed by any execution
step leading outside the cyclic path. The target program locations of such
execution steps are called \emph{exits} of the cyclic path.
%

The compact symbolic execution proceeds 
just like the classic symbolic execution until we enter a cyclic path for
which we have a template. Instead of executing the cyclic path, we can apply
the template to jump directly to exits of the cyclic path. At each exit, we
obtain a program state with a parameter $\kappa$. This parametric program
state represents all program states reached by execution paths composed of a
particular path to the cycle, $\kappa$ iterations along the cycle, and the
execution step leading to the exit. Symbolic execution then continues from
these program states in the classic way again.

Hence, compact symbolic execution reduces the path explosion problem as it
explores at once all execution paths that differ only in numbers of
iterations along the cyclic paths for which we have templates. As we will
see later, a price for this reduction comes in deepening the other weakness
of symbolic execution: while SMT queries of standard symbolic execution are
always quantifier-free, each application of a template adds one universal
quantifier to the SMT queries of compact symbolic execution. Although SMT
solvers fail to decide quantified queries significantly more often than
queries without quantifiers, our experimental results show that this
trade-off is acceptable as compact symbolic execution is able to detect more
errors in programs than the classic one. Moreover, future advances in SMT
solving can make the disadvantage of compact symbolic execution even
smaller.



\section{Basic Idea}\label{sec:basic}

This section presents basic ideas of compact symbolic execution. To
illustrate the ideas, we use a simple program represented by the flowgraph
of Figure~\ref{fig:linSrch}(a). The program implements a standard linear
search algorithm. It returns the least index \texttt{i} in the array
\texttt{A} such that \texttt{A[i]=x}. If \texttt{x} is not in \texttt{A} at
all, then the result is \texttt{-1}. In both cases the result is saved in
the variable \var{r}.
Before we describe the compact symbolic execution, we briefly recall the
classic symbolic execution~\cite{Kin76}.

\begin{figure}[!htb]
  \begin{tabular}{ccc}
    \hspace{-4ex}\centering
\tikzstyle{start} = [regular polygon,regular polygon sides=3,
    regular polygon rotate=180,thick,draw,minimum size=9mm,inner sep=0pt]
\tikzstyle{target} = [regular polygon,regular polygon sides=3,
    regular polygon rotate=0,thick,draw,minimum size=9mm,inner sep=0pt]
\tikzstyle{loc} = [circle,thick,draw,minimum size=6mm]
\tikzstyle{pre} = [<-,shorten <=1pt,>=stealth',semithick]
\tikzstyle{post} = [->,shorten <=1pt,>=stealth',semithick]
\footnotesize
\begin{tikzpicture}[node distance=1.0cm]
    \pgfsetmovetofirstplotpoint

    \pgfplothandlerclosedcurve
    \pgfplotstreamstart
        \pgfplotstreampoint{\pgfpoint{0.1cm}{-1.0cm}}
        \pgfplotstreampoint{\pgfpoint{-1.4cm}{-2.5cm}}
        \pgfplotstreampoint{\pgfpoint{0.1cm}{-4.0cm}}
        \pgfplotstreampoint{\pgfpoint{0.55cm}{-2.5cm}}
    \pgfplotstreamend
    \pgfsetfillcolor{gray!30}
    \pgfusepath{fill}
    \pgfusepath{stroke}

    \pgfsetfillcolor{black}
    
    \node [start] (a) {$a$};
    \node [loc] (b) [inner sep=2.5pt,below of=a,yshift=-5mm] {$b$}
        edge [pre] node [label=right:\texttt{i:=0}] {} (a);
    \node [loc] (c) [left of=b,yshift=-10mm] {$c$}
        edge [pre, bend left=20]
                node [label=above:\texttt{i<n~~~~}] {} (b);
    \node [loc] (f) [right of=b,inner sep=1.9pt,yshift=-10mm] {$f$}
        edge [pre, bend right=20]
                node [label=above:\texttt{~~~~i>=n}] {} (b);
    \node [loc] (d) [right of=c,yshift=-10mm] {$d$}
        edge [pre, bend left=20]
            node [label=left:\texttt{A[i]!=x}] {} (c)
        edge [post, bend right=20]
                node [label=left:\texttt{++i}] {} (b);
    \node [loc] (e) [left of=c,xshift=-5mm,yshift=-8mm] {$e$}
        edge [pre, bend left=20] node
            [label=above:\texttt{A[i]=x~~~~~}] {} (c);
    \node [target] (g) [below of=d,yshift=-5mm] {$g$}
        edge [pre, bend left=20]
                node [label=left:\texttt{r:=i~~}] {} (e)
        edge [pre, bend right=20]
                node [label=below:\texttt{~~~~~~r:=-1}] {} (f);
    \node [] (lab) [below of=g,yshift=-5mm] {(a)};
\end{tikzpicture}  & 
    \hspace{-4.5ex}\centering
\tikzstyle{start} = [regular polygon,regular polygon sides=3,
    regular polygon rotate=180,thick,draw,inner sep=0.5pt]
\tikzstyle{target} = [regular polygon,regular polygon sides=3,
    regular polygon rotate=0,thick,draw,inner sep=0.2pt]
\tikzstyle{loc} = [circle,thick,draw,inner sep=1.0pt,minimum size=5mm]
\tikzstyle{pre} = [<-,shorten <=1pt,>=stealth',semithick]
\tikzstyle{post} = [->,shorten <=1pt,>=stealth',semithick]
\footnotesize
\begin{tikzpicture}[node distance=0.9cm]

    \pgfsetmovetofirstplotpoint

    \pgfplothandlerclosedcurve
    \pgfplotstreamstart
      \pgfplotstreampoint{\pgfpoint{0.4cm}{-1cm}}
      \pgfplotstreampoint{\pgfpoint{0.4cm}{-6.87cm}}
      \pgfplotstreampoint{\pgfpoint{-0.4cm}{-6.87cm}}
      \pgfplotstreampoint{\pgfpoint{-0.4cm}{-1cm}}
    \pgfplotstreamend
    \pgfsetfillcolor{gray!30}
    \pgfusepath{fill}

    \pgfsetfillcolor{black}

    \node [loc,inner sep=1.5pt] (a) {$a$};

    \node [loc] (b) [inner sep=1.3pt,below of=a] {$b$}
        edge [pre] node {} (a);
    \node [loc] (c) [below of=b] {$c$}
        edge [pre] node [label=left:{\scriptsize $0<\sym{n}$}] {} (b);
    \node [loc,inner sep=0.5pt] (f) [right of=c,xshift=9mm] {$f$}
        edge [pre] node [label=above:{\scriptsize $~~0\geq\sym{n}$}] {} (b);
    \node [loc] (g) [below of=f] {$g$}
        edge [pre] node {} (f);
    \node [loc] (d) [below of=c] {$d$}
        edge [pre] node [label=right:{\scriptsize
            $~\sym{A}(0)\neq\sym{x}$}] {} (c);
    \node [loc] (e) [left of=d,xshift=-2mm] {$e$}
        edge [pre] node [label=above:{\scriptsize
            $\sym{A}(0)=\sym{x}~~~~~~~~~~~~$}] {} (c);
    \node [loc] (h) [below of=e] {$g$}
        edge [pre] node {} (e);

    \node [loc] (b1) [below of=d] {$b$}
        edge [pre] node {} (d);
    \node [loc] (c1) [below of=b1] {$c$}
        edge [pre] node [label=left:{\scriptsize $1<\sym{n}$}] {} (b1);
    \node [loc,inner sep=0.5pt] (f1) [right of=c1,xshift=9mm]{$f$}
        edge [pre] node [label=above:{\scriptsize $~~1\geq\sym{n}$}] {} (b1);
    \node [loc] (g1) [below of=f1] {$g$}
        edge [pre] node {} (f1);
    \node [loc] (d1) [below of=c1] {$d$}
        edge [pre] node [label=right:{\scriptsize
            $~\sym{A}(1)\neq\sym{x}$}] {} (c1);
    \node [loc] (e1) [left of=d1,xshift=-2mm] {$e$}
        edge [pre] node [label=above:{\scriptsize
            $\sym{A}(1)=\sym{x}~~~~~~~~~~~~$}] {} (c1);
    \node [loc] (h1) [below of=e1] {$g$}
        edge [pre] node {} (e1);

    \node [loc] (b2) [below of=d1] {$b$}
        edge [pre] node {} (d1);

    \node [circle,thick,inner sep=0pt,minimum size=5mm] (b3) [below of=b2] {}
        edge [pre] node {} (b2);
    \node [circle,thick,inner sep=0pt] (b33) [below of=b2,yshift=1.5mm] {\scriptsize $\vdots$};

    \node [] (lab) [below of=b3,yshift=2mm] {(b)};
\end{tikzpicture}  &
    \hspace{-1.5ex}\input{fig_linSrchCSET}
  \end{tabular}
  \caption{(a) A flowgraph \texttt{linSrch(A,n,x)}. (b) Classic symbolic
    execution tree of \texttt{linSrch}. (c) Compact symbolic execution tree
    of \texttt{linSrch}.}
  \label{fig:linSrch}
\end{figure}

\subsubsection{Classic Symbolic Execution} 
Symbolic execution runs a program over \emph{symbols} representing arbitrary
input values. For each input variable $\var{v}$, we denote a symbol passed
to it as $\sym{v}$. A \emph{program state} is a triple $(l,\theta,\varphi)$
consisting of a current program location $l$ in the flowgraph, a
\emph{symbolic memory} $\theta$, and a \emph{path condition}
$\varphi$. $\theta$ assigns to each program variable its current symbolic
value, i.e.~an expression over the symbols. For example, if the first
instruction of a program is the assignment \texttt{i:=2*n+x}, then
$\theta(\var{i})=2\sym{n}+\sym{x}$ after its execution. The path condition
$\varphi$ is a quantifier-free first order logic formula representing a
necessary and sufficient condition on symbols to drive the execution along
the currently executed path. $\varphi$ is initially $\true$ and it can be
updated at program branchings. For example, in a location with two out-edges
labelled by \texttt{x>n+5} and \texttt{x<=n+5}, we instantiate the
conditions with use of the current $\theta$ and we check whether the current
path condition $\varphi$ implies their validity. Namely, we ask for validity
of implications $\varphi \rightarrow \theta(\var{x}) > \theta(\var{n}) + 5$
and $\varphi \rightarrow \theta(\var{x}) \leq \theta(\var{n}) + 5$.
If the first implication is valid, the symbolic execution continues along
the first branch. If the second implication is valid, the symbolic execution
continues along the second branch. If none of them is valid, it means that
we can follow either of the two branches.  Hence, the symbolic execution
forks in order to execute both branches.  In this case, we update the path
condition on the first branch to $\varphi~\wedge~\theta(\var{x}) >
\theta(\var{n}) + 5$ and the one on the second branch to $\varphi~\wedge~
\theta(\var{x}) \leq \theta(\var{n}) + 5$. Note that the whole program state
is forked into two states in this case.

Due to the forks, symbolic execution is traditionally represented by a tree
called \emph{classic symbolic execution tree}. Nodes of the tree are
labelled by program states computed during the execution.
Edges of the tree correspond to transitions between program states labelling
their end nodes.  In Figure~\ref{fig:linSrch}(b), there is a classic
symbolic execution tree of the flowgraph from Figure~\ref{fig:linSrch}(a).
For readability of symbolic execution tree figures, nodes are marked only
with current program locations instead of full program states.  In addition,
we label branching edges by instances of the corresponding branching
conditions in the flowgraph. These labels allow us to reconstruct the path
condition for each node in the tree: it is the conjunction of labels of all
edges along the path from the root to the node. Note that contents of
symbolic memories are not depicted in the figure.

\subsubsection{Overall Effect of Cyclic Paths}
If we look at the flowgraph of Figure~\ref{fig:linSrch}(a), we immediately
see that locations $b,c,d$ and edges between them form a cyclic path
highlighted by a grey region. All executions entering the path (at location
$b$) proceed in the same way: each execution performs $\kappa$ iterations
along the cyclic path (for some $\kappa\ge0$) and continues either along the
edge $(b,f)$ or along the edges $(b,c)$ and $(c,e)$ to leave it.
Compact symbolic execution aims to effectively exploit the uniformity of all
executions along this cyclic path. To do so, we need to find a unified
declarative description of the \emph{effect} of all executions along the
cyclic path on a symbolic memory and a path condition.  We analyse the
cyclic path, together with all the edges allowing to leave it, separately
from the rest of the flowgraph. First we introduce symbols for all variables
in the isolated part of the program, since they all are now input variables
to the part. In our example, we introduce symbols
$\sym{n},\sym{x},\sym{i},\sym{A}$ representing the values of the
corresponding variables $\var{n},\var{x},\var{i},\var{A}$ at the entry
location $b$, before the first iteration. We emphasise that the introduced
symbols do \emph{not} represent inputs to the whole flowgraph, but rather
symbolic values of the corresponding variables at the moment of entering the
cyclic path at the location $b$ via the edge $(a,b)$.

Now we study the effect of $\kappa$ iterations along the cyclic path. One can
see that each iteration increases the value of $\var{i}$ by one while values
of the other variables keep unchanged. Hence, after $\kappa$ iterations, the
value of $\var{i}$ is $\sym{i}+\kappa$.  Formally, the effect of $\kappa$
iterations of the cycle on values of all variables is described by the
following \emph{parametric symbolic memory} $\theta_*\prm{\kappa}$ with the
parameter $\kappa$:
\[
\setlength{\arraycolsep}{2pt}
\begin{array}{rclp{3ex}rcl}
  \theta_*\prm{\kappa}(\var{n})&=&\sym{n},&&
  \theta_*\prm{\kappa}(\var{x})&=&\sym{x},\\
  \theta_*\prm{\kappa}(\var{i})&=&\sym{i}+\kappa,&&
  \theta_*\prm{\kappa}(\var{A})&=&\sym{A}.
\end{array}
\]

Further, we formulate a \emph{parametric path condition}
$\varphi_*\prm{\kappa}$ representing the path condition after $\kappa$
iterations along the cyclic path. To perform all these $\kappa$ iterations
along the cyclic path, both conditions \texttt{i<n} and \texttt{A[i]!=x}
along the path have to be valid in each of $\kappa$ iterations. Therefore,
the path condition after $\kappa$ iterations has the form
\begin{align*}
  \sym{i}<\sym{n}&~~\wedge~~\sym{A}(\sym{i})\not=\sym{x}~~\wedge\\
  \wedge~~\sym{i}+1<\sym{n}&~~\wedge~~\sym{A}(\sym{i}+1)\not=\sym{x}~~\wedge\\
  &~~~~\vdots\\
  \wedge~~\sym{i}+(\kappa-1)<\sym{n}&~~\wedge~~\sym{A}(\sym{i}+(\kappa-1))\not=\sym{x},
\end{align*}
where $\tau$-th line, $\tau\in\{0,1,\ldots,\kappa-1\}$, consists of two
predicates which are instances of the conditions \texttt{i<n} and
\texttt{A[i]!=x} respectively after $\tau$ iterations of the cyclic path,
i.e.~during the $(\tau+1)$-st iteration. Unfortunately, the conjunction
above is not a first order formula as its length depends on the parameter
$\kappa$, whose value can be arbitrary. The conjunction can be equivalently
expressed by the following universally quantified formula:
\[
  \forall\tau(0\le\tau<\kappa\rightarrow
  (\sym{i}+\tau<\sym{n}\,\wedge\,\sym{A}(\sym{i}+\tau)\neq\sym{x})).
\]
If we now add to the formula above the obvious fact that we cannot
iterate the cyclic path negative number of times (i.e.~$\kappa \geq
0$), we get the resulting parametric path condition
$\varphi_*\prm{\kappa}$ as
\[
  \varphi_*\prm{\kappa}~=~\kappa\ge0~\wedge~\forall
  \tau(0\le\tau<\kappa\rightarrow
  (\sym{i}+\tau<\sym{n}~\wedge~\sym{A}(\sym{i}+\tau)\neq\sym{x})).
\]

Finally, we use $\theta_*\prm{\kappa}$ and $\varphi_*\prm{\kappa}$ to define
symbolic memory $\theta_{b\!f}\prm{\kappa}$ and path condition
$\varphi_{b\!f}\prm{\kappa}$ describing the effect of $\kappa$ iterations of
the cyclic path followed by leaving it through the edge $(b,f)$, and
similarly $\theta_{ce}\prm{\kappa},\varphi_{ce}\prm{\kappa}$ with the
analogous information for leaving the cyclic path through the edge
$(c,e)$. As the edges $(b,f),(b,c),(c,e)$ do not modify any variable, we
immediately get
$\theta_{b\!f}\prm{\kappa}=\theta_{ce}\prm{\kappa}=\theta_*\prm{\kappa}$. Further,
$\varphi_{b\!f}\prm{\kappa}$ and $\varphi_{ce}\prm{\kappa}$ are conjunctions
of $\varphi_*\prm{\kappa}$ with the instances of the conditions on the edge
$(b,f)$ or on the edges $(b,c),(c,e)$, respectively. Hence, the path
conditions $\varphi_{b\!f}\prm{\kappa},\varphi_{ce}\prm{\kappa}$ are defined
as follows:
\[
\setlength{\arraycolsep}{0pt}
\begin{array}{rcl}
  \varphi_{b\!f}\prm{\kappa}&~=~&\varphi_*\prm{\kappa}~\wedge~\sym{i}+\kappa\ge\sym{n}\\
  \varphi_{ce}\prm{\kappa}&=&\varphi_*\prm{\kappa}~\wedge~\sym{i}+\kappa<\sym{n}
    ~\wedge~\sym{A}(\sym{i}+\kappa)=\sym{x}
\end{array}
\]

The overall effect of the considered cyclic path with its exit edges
is now fully described by a so-called \emph{template} consisting of
the entry location $b$ to the cyclic path and two triples
$(f,\theta_{b\!f}\prm{\kappa},\varphi_{b\!f}\prm{\kappa})$ and
$(e,\theta_{ce}\prm{\kappa},\varphi_{ce}\prm{\kappa})$, one for each
exit edge from the cyclic path.
Note that the triples have the same structure and meaning as program
states in classic symbolic execution. The only difference is that the
triples are parametrised by the parameter $\kappa$.

\subsubsection{Compact Symbolic Execution}
The template is used during \emph{compact symbolic execution} of the
program. The execution starts at the location $a$ of the flowgraph.  The
\emph{compact symbolic execution tree} initially consists of a single
node 
labelled by the initial state $(a,\theta_I,\true)$, where $\theta_I$ is the
initial symbolic memory assigning to each input variable \var{v} the
corresponding symbol $\sym{v}$. Now we execute the instruction \texttt{i:=0}
of the flowgraph edge $(a,b)$ using the classic symbolic execution. The tree
is extended with a single successor node, say $u$, labelled with a program
state $(b,\theta',\varphi')$. As we have a template for the location $b$, we
can instantiate it instead of executing the original program. The node $u$
gets one successor for each triple of the template. The triple
$(f,\theta_{b\!f}\prm{\kappa},\varphi_{b\!f}\prm{\kappa})$ generates a
successor node labelled by a program state
$(f,\theta'_{b\!f}\prm{\kappa},\varphi'_{b\!f}\prm{\kappa})$. Note that we
cannot use $(f,\theta_{b\!f}\prm{\kappa},\varphi_{b\!f}\prm{\kappa})$
directly as $\theta_{b\!f}\prm{\kappa},\varphi_{b\!f}\prm{\kappa}$ describe
executions starting just at the entry location $b$, while
$\theta'_{b\!f}\prm{\kappa},\varphi'_{b\!f}\prm{\kappa}$ have to reflect the
effect of the executions starting at $a$. We create
$\theta'_{b\!f}\prm{\kappa},\varphi'_{b\!f}\prm{\kappa}$ by composing
$\theta_{b\!f}\prm{\kappa},\varphi_{b\!f}\prm{\kappa}$ with
$\theta',\varphi'$. The composition is precisely described in the following
section. In our simple program, $\theta',\varphi'$ reflect only the effect
of assignment \texttt{i:=0}. Thus, $\theta'_{b\!f}\prm{\kappa}$ and
$\varphi'_{b\!f}\prm{\kappa}$ equal to $\theta_{b\!f}\prm{\kappa}$ and
$\varphi_{b\!f}\prm{\kappa}$ respectively, where $\sym{i}$ is replaced by
$0$.
The second triple $(e,\theta_{ce}\prm{\kappa},\varphi_{ce}\prm{\kappa})$ of
the template generates the successor node labelled with a program state
$(e,\theta'_{ce}\prm{\kappa},\varphi'_{ce}\prm{\kappa})$ computed
analogously using the composition. The symbolic execution then continues
from the locations $f$ and $e$ in parallel using the classic symbolic
execution. The resulting compact symbolic execution tree is depicted in
Figure~\ref{fig:linSrch}(c). Observe that the two nodes introduced during
template instantiation are drawn with different shape than the others.
Moreover, labels of these nodes immediately indicate all paths in the
flowgraph whose execution is replaced by the application of the template.

If we compare trees at Figures~\ref{fig:linSrch}(b) and
\ref{fig:linSrch}(c), we immediately see that the compact tree is much
smaller than the classic one. In particular, the infinite path in the
classic tree (highlighted by the grey region) does not appear in the compact
one. However, both trees keep the same information in all their leaves. For
example, the program state of the left leaf of the compact tree contains the
following path condition
\[
  \varphi\prm{\kappa}~=~\kappa\geq 0~\wedge~\forall\tau(0\le\tau<\kappa
  \,\rightarrow\,(\tau<\sym{n}\,\wedge\,\sym{A}(\tau)\neq\sym{x}))
  ~\wedge~\kappa<\sym{n}~\wedge~\sym{A}(\kappa)=\sym{x}.
\]
Let us mark all leaves on the left-hand side of the classic tree as
$g_0,g_1,g_2,\ldots$ and let $\varphi_0,\varphi_1,\varphi_2,\ldots$ be
the corresponding path conditions (remember, that each $\varphi_i$ is
the conjunction of labels along the corresponding paths in the tree)
and check that $\varphi_i$ is equivalent to $\varphi\prm{i}$ for each
$i\ge 0$. For example, for $i = 1$ we have
\begin{align*}
  \varphi_1~&=~0<\sym{n}~\wedge~\sym{A}(0)\neq\sym{x}
  ~\wedge~1<\sym{n}~\wedge~\sym{A}(1)=\sym{x},\\
  \varphi\prm{1}~&=~1\geq 0~\wedge~\forall\tau(0\le\tau<1
  \,\rightarrow\,(\tau<\sym{n}\,\wedge\,\sym{A}(\tau)\neq\sym{x}))
  ~\wedge~1<\sym{n}~\wedge~\sym{A}(1)=\sym{x},
\end{align*}
and hence $\varphi_1\equiv\varphi\prm{1}$. Similarly, each symbolic memory
of a node $g_i$ is an instance $\theta\prm{i}$ of the parametrized symbolic
memory in the left leaf of the compact tree. Analogous relations hold for
leafs on the right-hand sides of the compact and the classic symbolic
execution trees.


\section{Description of the Technique}\label{sec:technique}

This section describes the compact symbolic execution in details. For
simplicity, we consider only programs represented by a single flowgraph
manipulating integer variables and read-only integer arrays. The technique
can be extended to handle mutable integer arrays, other data types, and
function calls.


\subsection{Preliminaries}

Besides the terms and notation introduced in the previous section, we use
also the following terms and notation.

We write $\theta\prm{\bld{\kappa}}$ to emphasise that $\bld{\kappa}$ is the
set of parameters appearing in the symbolic memory $\theta$. Similarly, we
write $\varphi\prm{\bld{\kappa}}$ to emphasise that $\bld{\kappa}$ is the
set of parameters with free occurrences in the formula $\varphi$. We also
write $s\prm{\bld{\kappa}}$ or $(l,\theta,\varphi)\prm{\bld{\kappa}}$, if
$s=(l,\theta\prm{\bld{\kappa}},\varphi\prm{\bld{\kappa}})$.

A \emph{valuation} of parameters is a function $\bld{\nu}$ from a finite set
of parameters to non-negative integers.
By $\theta\prm{\bld{\nu}}$, $\varphi\prm{\bld{\nu}}$, and $s\prm{\bld{\nu}}$
we denote a symbolic memory $\theta\prm{\bld{\kappa}}$, a formula
$\varphi\prm{\bld{\kappa}}$, and a program state $s\prm{\bld{\kappa}}$
respectively, where all free occurrences of each $\kappa\in\bld{\kappa}$ are
replaced by~$\bld{\nu}(\kappa)$. If $\bld{\kappa}=\{\kappa\}$ is a singleton
and $\bld{\nu}(\kappa)=\nu$, we simply write
$\theta\prm{\kappa},\varphi\prm{\kappa},s\prm{\kappa}$ instead of
$\theta\prm{\bld{\kappa}},\varphi\prm{\bld{\kappa}},s\prm{\bld{\kappa}}$ and
$\theta\prm{\nu},\varphi\prm{\nu},s\prm{\nu}$ instead of
$\theta\prm{\bld{\nu}},\varphi\prm{\bld{\nu}},s\prm{\bld{\nu}}$.

If $\theta$ is a symbolic memory and $\varphi$ is a formula or a symbolic
expression, then $\theta\langle\varphi\rangle$ denotes $\varphi$ where all
occurrences of all symbols $\sym{a}$ are simultaneously replaced by
$\theta(\var{a})$, i.e.~by the value of the corresponding variable stored in
$\theta$.

When $\theta_1$ and $\theta_2$ are two symbolic memories, then
$\theta_1\compose\theta_2$ is a \emph{composed symbolic memory} satisfying
$(\theta_1\compose\theta_2)(\var{a})=\theta_1\langle\theta_2(\var{a})\rangle$
for each variable $\var{a}$. Intuitively, the symbolic memory
$\theta_1\compose\theta_2$ represents an overall effect of a code with
effect $\theta_1$ followed by a code with effect $\theta_2$.



We define \emph{composition of states} $s_1=(l_1,\theta_1,\varphi_1)$ and
$s_2=(l_2,\theta_2,\varphi_2)$ to be the state $s_1\compose
s_2=(l_2,\theta_1\compose\theta_2,\varphi_1\wedge\theta_1\langle\varphi_2\rangle)$. The
composed state corresponds to the symbolic state resulting from symbolic
execution of the code that produced $s_1$ immediately followed by the code
that produced $s_2$.

We often use a dot-notation to denote elements of a program state $s$: $s.l$
denotes its current location, $s.\theta$ denotes its symbolic memory, and
$s.\varphi$ denotes its path condition. Further, if $u$ is a node of a
symbolic execution tree, then $u.s$ denotes the program state labelling $u$
and we write $u.l$, $u.\theta$, and $u.\varphi$ instead of $(u.s).l$,
$(u.s).\theta$, and $(u.s).\varphi$.


Two program states $s_1,s_2$ are \emph{equivalent}, written $s_1\equiv s_2$,
if $s_1.l=s_2.l$, the formula $s_1.\theta(\var{a})=s_2.\theta(\var{a})$
holds for each variable $\var{a}$, and the formulae $s_1.\varphi$ and
$s_2.\varphi$ are equivalent in the logical sense.

Considered integer programs operate in undecidable theories (like Peano
arithmetic). We assume that there is a function $\sat(\varphi)$ that returns
SAT if it can prove satisfiability of $\varphi$, UNSAT if it can prove
unsatisfiability of $\varphi$, and UNKNOWN otherwise.


\subsection{Templates and Their Computation}

We start with a formal definition of \emph{cycle}, i.e.~a cyclic path with a
specified entry location and exit edges.
\begin{dfn}[Cycle]\label{def:Part}
  Let $(u,e)$ be an edge of a flowgraph $P$, $\pi=e\omega e$ be a cyclic
  path in $P$ such that $ue$ is not a suffix of $\pi$ and all nodes in
  $\omega e$ are pairwise distinct,
  and let $X=\{(u_1,x_1),\ldots,(u_n,x_n)\}$ be the set of all edges of $P$
  that do not belong to the path $\pi$, but their start nodes
  $u_1,\ldots,u_n$ lie on $\pi$. Then $C = (\pi,e,X) $ is a \emph{cycle} in
  $P$, the path $\pi$ is a \emph{core} of $C$, $e$ is an \emph{entry
    location} of $C$, all edges in $X$ are \emph{exit edges} of $C$, and
  each location $x_i$ is called an \emph{exit location} of $C$.
\end{dfn}

We emphasise that the core of a cycle is a cyclic path in a graph
sense. Note that a program loop can generate more independent cycles,
e.g.~if the loop contains interal branching or loop nesting
(see Appendix~\ref{sec:LoopsCycles} for more details).

A template for a cycle $(\pi,e,X)$ is a pair $(e,M)$, where $M$ is a set
containing one parametric program state for each exit edge of the cycle.  A
template for a given cycle can be computed by
Algorithm~\ref{alg:template}. The algorithm uses a function
$\texttt{executePath}(P,\rho)$ which applies classic symbolic execution to
instructions on the path $\rho$ in the program $P$ and returns the resulting
symbolic state $(u,\theta,\varphi)$, where $u$ is the last location in
$\rho$.

\begin{algorithm}[!tb]
  \caption{computeTemplate}\label{alg:template}
  \KwIn{a program $P$ and a cycle $(\pi,e,X)$}
  \KwOut{a template $(e,M)$ or {\bf null } (if the computation fails)}
  \BlankLine
  \DontPrintSemicolon
$(e,\theta,\varphi)\aset\texttt{executePath(}P,\pi\texttt{)}$\;
\label{alg:template:begin}
\lIf{\rm $\sat(\varphi)\neq$ SAT}{
  \label{alg:template:satCheck1}
  \Return{\bf null} \;
}
Set $\theta_*\prm{\kappa}(\var{a})=\sym{a}$~~for each array variable $\var{a}$\;
Set $\theta_*\prm{\kappa}(\var{a})=\bot$~~for each integer variable $\var{a}$\;
\label{alg:template:memoryBegin}
\Repeat{$\var{change} = \false$}{ \label{alg:template:repeatBegin}
  $\var{change} \aset \false$\;
  \ForEach{{\rm integer variable} $\var{a}$}{
    \label{alg:template:repeat:foreachBegin}
    \If{$\theta_*\prm{\kappa}(\var{a})=\bot$}{
      \label{alg:template:repeatIsBot}
      \If{$\theta(\var{a})=\sym{a}+c$~~{\rm for some constant} $c$}{
        \label{alg:template:repeat:foreach:BranchAritSeq}
        $\theta_*\prm{\kappa}(\var{a}) \aset \sym{a}+\kappa\cdot c$\;
        $\var{change} \aset \true$\;
      }  
      \If{$\theta(\var{a})=\sym{a}\cdot c$~~{\rm for some constant} $c$}{
        \label{alg:template:repeat:foreach:BranchGeomSeq}
        $\theta_*\prm{\kappa}(\var{a}) \aset \sym{a}\cdot c^\kappa$\;
        $\var{change} \aset \true$\;
      }
      \If{\rm $\theta(\var{a})=g$~~for some symbolic expression $g$ such
        \mbox{~~~that $\theta_*\prm{\kappa}(\var{b})\neq\bot$ for each symbol $\sym{b}$ in $g$}}{
        \label{alg:template:repeat:foreach:BranchTemporary}
        $\theta_*\prm{\kappa}(\var{a}) \aset
        \ite(\kappa>0,\theta_*\prm{\kappa-1}\langle g\rangle,\sym{a})$\;
        $\var{change} \aset \true$\;
        \label{alg:template:repeat:foreachEnd}
        \label{alg:template:repeat:foreach:BranchLast}
      }  
    }
  }
}
\label{alg:template:repeatEnd}
\lIf{$\theta_*\prm{\kappa}(\var{a})=\bot$ {\rm for some variable} $\var{a}$}{
  \label{alg:template:isMemoryPrecise}
  \Return{\bf null} \;
}
$\varphi_*\prm{\kappa}\aset\kappa\ge 0~\wedge~
\forall\tau(0\le\tau<\kappa\implies\theta_*\prm{\tau}\langle\varphi\rangle)$\;
\label{alg:template:endMemory}
$M\aset\emptyset$\; \label{alg:template:secondPartBegin}
\ForEach{$(u,x) \in X$}{
  \label{alg:template:foreach2begin}
  Let $\rho$ be the prefix of $\pi$ from $e$ to $u$ \;
  \label{alg:template:pathToExit}
  $(x,\theta,\varphi)\aset\texttt{executePath(}P,\rho x\texttt{)}$\;
  \label{alg:template:exitState}
  \lIf{\rm $\sat(\varphi)=$ UNKNOWN}{
    \label{alg:template:satCheck2a}
    \Return{\bf null} \;
  }
  \If{\rm $\sat(\varphi)=$ SAT}{
    \label{alg:template:satCheck2b}
    $M\aset M \cup \{ (x,~\theta_*\prm{\kappa} \compose \theta,~
    \varphi_*\prm{\kappa} \wedge \theta_*\prm{\kappa}
    \langle\varphi\rangle) \}$
    \label{alg:template:extendM}
    \label{alg:template:foreach2end}
  }
}
\Return{$(e,M)$}
\label{alg:template:end}
\end{algorithm}

The first part of the algorithm
(lines~\ref{alg:template:begin}--\ref{alg:template:endMemory}) tries to
derive a parametric symbolic memory $\theta_*\prm{\kappa}$ and a parametric
path condition $\varphi_*\prm{\kappa}$, which together describe the symbolic
state after $\kappa$ iterations over the core $\pi$ of the cycle $C$, for
any $\kappa \geq 0$. Initially, at line~\ref{alg:template:begin}, we compute
the effect of a \emph{single} iteration of the core $\pi$ and then we check
whether the iteration is feasible. If we cannot prove its feasibility, we
stop the template computation and return \textbf{null}.\footnote{It is
  possible that the iteration is feasible and the chosen SMT solver failed
  to prove it. However, as parametric path conditions of the resulting
  template are derived from $\varphi$, it is highly probable that
  applications of the template in compact symbolic execution would also lead
  to failures of the SMT solver. Such a template is useless.}  Otherwise, we
get a symbolic state $(e,\theta,\varphi)$, whose elements $\theta$ and
$\varphi$ form a basis for the computation of $\theta_*\prm{\kappa}$ and
$\varphi_*\prm{\kappa}$.

We compute $\theta_*\prm{\kappa}$ first. As arrays are read-only, we
directly set $\theta_*\prm{\kappa}(\var{a})$ to $\sym{a}$ for each array
variable $\var{a}$. For integer variables, we initialise
$\theta_*\prm{\kappa}$ to an undefined value $\bot$. Then, in the loop at
lines~\ref{alg:template:repeatBegin}--\ref{alg:template:repeatEnd}, we try
to define $\theta_*\prm{\kappa}$ for as many variables as possible. For each
variable $\var{a}$, $\theta_*\prm{\kappa}(\var{a})$ is defined at most
once. Hence, the loop terminates after finite number of iterations. The
value of $\theta_*\prm{\kappa}(\var{a})$ is defined according to the content
of $\theta(\var{a})$ and known values of $\theta_*\prm{\kappa}$. In
particular, the conditions at
lines~\ref{alg:template:repeat:foreach:BranchAritSeq}
and~\ref{alg:template:repeat:foreach:BranchGeomSeq} check if the values of
$\var{a}$ follow an arithmetic or a geometric progression during the
iterations. If they do, we can easily express the exact value of \var{a}
after any $\kappa$ iterations. Note that the case when the value of \var{a}
variable is not changed along $\pi$ at all is a special case of an
arithmetic progression ($c = 0$). Obviously, one can add support for other
kinds of progression. The condition at
line~\ref{alg:template:repeat:foreach:BranchTemporary} covers the case when
each iteration assigns to $\var{a}$ an expression containing only variables
with known values of $\theta_*\prm{\kappa}$. The if-then-else expression
$\ite(\kappa>0,\theta_*\prm{\kappa-1}\langle g\rangle,\sym{a})$ assigned to
$\theta_*\prm{\kappa}(\var{a})$ says that the value of $\var{a}$ after
$\kappa>0$ iterations is given by the value of expression $g$ where each
symbol $\sym{b}$ represents the value of $\var{b}$ at the beginning of the
last iteration and thus it must be replaced by
$\theta_*\prm{\kappa-1}(\var{b})$. The value of $\var{a}$ after 0 iterations
is obviously unchanged, i.e.~$\sym{a}$.

Once we get to line~\ref{alg:template:isMemoryPrecise}, we check whether we
succeeded to define $\theta_*\prm{\kappa}$ for all variables. If we failed
for at least one variable, then we fail to compute a template for $C$ and we
return \textbf{null}. Otherwise, at line~\ref{alg:template:endMemory} we
compute the formula $\varphi_*\prm{\kappa}$ in accordance with the intuition
provided in Section~\ref{sec:basic}.

The second part of the algorithm
(lines~\ref{alg:template:secondPartBegin}--\ref{alg:template:end}) computes
the set $M$ of the resulting template. As we already know from
Section~\ref{sec:basic}, we try to compute one element of $M$ for each exit
edge $(u, x) \in X$. At line~\ref{alg:template:pathToExit} we compute a path
$\rho$ from the entry location $e$ to $u$ (along $\pi$), where we escape
from $\pi$ to the location $x$. The path $\rho x$ is then symbolically
executed. If we fail to decide feasibility of the path, we fail to compute a
template. If the path is feasible, we can escape $\pi$ by taking the exit
edge $(u,x)$. Therefore, only in this case we add a new element to $M$ at
line~\ref{alg:template:extendM}. The structure of the element follows the
intuition given in Section~\ref{sec:basic}.

One can immediately see that the algorithm always terminates. Now we
formulate a theorem describing properties of the computed template
$(e,M)$. The theorem is crucial for proving soundness and completeness of
compact symbolic execution. Roughly speaking, the theorem says that whenever
a node $u$ of the symbolic execution tree of a program $P$ satisfies
$u.l=e$, then the subtree rooted in $u$ has the property that each branch to
a leaf contains a node $w$ such that $w.s$ corresponds to the composition of
$u.s$ and a suitable instance of some program state of the template (L1),
and vice versa (L2). A proof of the theorem can be found in
Appendix~\ref{sec:Proofs}.

\begin{thm}[Template Properties]
  Let $T$ be a classic symbolic execution tree of $P$ and let
  $\big(e,\{(l_1,\theta_1\prm{\kappa},\varphi_1\prm{\kappa}), \ldots,
  (l_n,\theta_n\prm{\kappa},\varphi_n\prm{\kappa})\}\big)$ be a template for a
  cycle $(\pi,e,X)$ in $P$ produced by Algorithm~\ref{alg:template}. Then
  the following two properties hold:
  \begin{itemize}
  \item[(L1)] For each path $\pi = u \omega$ in $T$ leading from a
    node $u$ satisfying $u.l = e$ to a leaf, there is a node $w$ of
    $\omega$, an index $i \in \{ 1, \ldots, n \}$, and an integer $\nu \geq
    0$ such that $w.s \equiv u.s \compose
    (l_i,\theta_i\prm{\nu},\varphi_i\prm{\nu})$.
  \item[(L2)] For each node $u$ of $T$, an index $i \in \{ 1, \ldots, n \}$,
    and an integer $\nu\geq 0$ such that $u.l = e$ and $(u.\varphi
    \wedge u.\theta\ese{\varphi_i\prm{\nu}})$ is satisfiable, there is a
    successor $w$ of $u$ in $T$ such that $w.s \equiv u.s \compose
    (l_i,\theta_i\prm{\nu},\varphi_i\prm{\nu})$.
  \end{itemize}
\end{thm}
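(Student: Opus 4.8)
The plan is to reduce both (L1) and (L2) to a single \emph{iteration lemma} that pins down the exact symbolic state obtained by executing, from the cycle entry $e$ with the identity memory $\theta_I$ (each variable mapped to its own symbol), the concrete path $\pi^\nu\rho_i x_i$ that performs $\nu$ full traversals of the core $\pi$, then follows the prefix $\rho_i$ of $\pi$ from $e$ to the start node $u_i$ of the $i$-th exit edge, and finally takes that exit edge $(u_i,x_i)$. Writing $\mathrm{exec}(s,\rho)$ for the state produced by classic symbolic execution of a path $\rho$ started from $s$, the lemma reads $\mathrm{exec}\big((e,\theta_I,\true),\pi^\nu\rho_i x_i\big)\equiv(l_i,\theta_i\prm{\nu},\varphi_i\prm{\nu})$ for every $\nu\ge 0$ and every exit index $i$. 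Because state composition was designed to model sequential execution, this immediately gives $\mathrm{exec}(u.s,\pi^\nu\rho_i x_i)\equiv u.s\compose(l_i,\theta_i\prm{\nu},\varphi_i\prm{\nu})$ for any tree node $u$ with $u.l=e$, which is exactly the right-hand side occurring in both (L1) and (L2).

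To prove the iteration lemma I would first show, by induction on $\nu$, that the memory after $\nu$ traversals of the core equals $\theta_*\prm{\nu}$; equivalently $\theta_*\prm{0}=\theta_I$ and $\theta_*\prm{\nu}=\theta_*\prm{\nu-1}\compose\theta$, where $\theta$ is the single-iteration memory returned at line~\ref{alg:template:begin}. The base case is immediate, since every defining clause evaluates to $\sym{a}$ at $\kappa=0$. For the step I would treat the three clauses of the inner loop separately: in the arithmetic case $\theta(\var{a})=\sym{a}+c$ one computes $(\theta_*\prm{\nu-1}\compose\theta)(\var{a})=\theta_*\prm{\nu-1}(\var{a})+c=\sym{a}+\nu c=\theta_*\prm{\nu}(\var{a})$, the geometric case is analogous, and in the \ite-case the equality $(\theta_*\prm{\nu-1}\compose\theta)(\var{a})=\theta_*\prm{\nu-1}\langle g\rangle$ is literally the $\kappa>0$ branch of the assigned expression. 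A second, similar induction shows that the path condition after $\nu$ traversals is $\bigwedge_{\tau=0}^{\nu-1}\theta_*\prm{\tau}\langle\varphi\rangle$, because the guards collected during the $(\tau{+}1)$-st iteration, rewritten in the entry symbols, are exactly $\theta_*\prm{\tau}\langle\varphi\rangle$; together with $\nu\ge 0$ this is the instantiation $\varphi_*\prm{\nu}$ of the quantified formula built at line~\ref{alg:template:endMemory}. Hence $\mathrm{exec}((e,\theta_I,\true),\pi^\nu)\equiv(e,\theta_*\prm{\nu},\varphi_*\prm{\nu})$. Appending $\rho_i x_i$ and using compositionality of symbolic execution together with $\mathrm{exec}((e,\theta_I,\true),\rho_i x_i)=(x_i,\hat\theta,\hat\varphi)$, the state recomputed at line~\ref{alg:template:exitState}, produces exactly the pair $\big(\theta_*\prm{\nu}\compose\hat\theta,\ \varphi_*\prm{\nu}\wedge\theta_*\prm{\nu}\langle\hat\varphi\rangle\big)$ stored in $M$ at line~\ref{alg:template:extendM}, i.e.\ $(l_i,\theta_i\prm{\nu},\varphi_i\prm{\nu})$.

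For (L1), take a path $u\omega$ from $u$ (with $u.l=e$) to a leaf. By Definition~\ref{def:Part} the non-entry nodes of the core are pairwise distinct and the only edges leaving core nodes off the core are the exit edges in $X$, so staying on the core is deterministic and, the path being finite, it cannot keep returning to $e$ forever; therefore it must eventually take an exit edge. Let $w$ be the target of the \emph{first} exit edge taken. Then the segment of the path from $u$ to $w$ is precisely $\pi^\nu\rho_i x_i$, where $\nu$ is the number of completed traversals and $i$ is the index of that exit edge, and the iteration lemma (composed with $u.s$) yields $w.s\equiv u.s\compose(l_i,\theta_i\prm{\nu},\varphi_i\prm{\nu})$, as required.

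For (L2), observe that $u.\varphi\wedge u.\theta\langle\varphi_i\prm{\nu}\rangle$ is exactly the path condition of $u.s\compose(l_i,\theta_i\prm{\nu},\varphi_i\prm{\nu})$, so its satisfiability coincides with feasibility of the path $\pi^\nu\rho_i x_i$ executed from $u$. Since classic symbolic execution follows a branch precisely when the conjunction of the current path condition with the branch guard is satisfiable, and every prefix of a feasible path is again feasible, the entire path is realised in $T$ and delivers a successor $w$ of $u$ with $w.s=\mathrm{exec}(u.s,\pi^\nu\rho_i x_i)\equiv u.s\compose(l_i,\theta_i\prm{\nu},\varphi_i\prm{\nu})$ by the iteration lemma. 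I expect the main obstacle to lie in the memory induction: Algorithm~\ref{alg:template} fixes $\theta_*\prm{\kappa}$ through an interleaved fixpoint over variables with mutual dependencies and a self-referential \ite-clause, so one must argue that, upon successful termination, the defining equations of all variables are simultaneously consistent with the single recurrence $\theta_*\prm{\nu}=\theta_*\prm{\nu-1}\compose\theta$. A secondary subtlety is to justify, in (L2), that feasibility really forces the branch to be explored despite the three-valued satisfiability oracle used in place of an exact decision procedure.
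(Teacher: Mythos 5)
Your proposal is correct and takes essentially the same route as the paper's proof: like the paper, you express $\nu$ traversals of the core as iterated memory compositions $\theta^{\nu}$ together with the conjunction $\theta^{0}\ese{\varphi}\wedge\ldots\wedge\theta^{\nu-1}\ese{\varphi}$ of instantiated guards, identify these with $\theta_*\prm{\nu}$ and $\varphi_*\prm{\nu}$, and then obtain (L1) by decomposing any exiting branch as $\pi^{\nu}\rho_i x_i$ and (L2) by unfolding $u.\varphi\wedge u.\theta\ese{\varphi_i\prm{\nu}}$ into the classic path condition of that very path, whose satisfiability forces its presence in $T$. The only difference is presentational and slightly to your credit: you isolate an explicit iteration lemma proved by induction on $\nu$ (with a per-clause check of the recurrence $\theta_*\prm{\nu}\equiv\theta_*\prm{\nu-1}\compose\theta$, including the $\ite$ case), whereas the paper merely posits the requirement $\theta_*\prm{\kappa}\equiv\theta^{\kappa}$ and observes that the algorithm's three rewriting rules implement it, leaving that induction implicit.
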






\subsection{Compact Symbolic Execution}

The compact symbolic execution is formally defined by
Algorithm~\ref{alg:executeSymbolically}. If we ignore the lines marked by
$\Box$, then we get the classic symbolic execution. As we focus on compact
symbolic execution, we describe the algorithm with $\Box$ lines
included. The algorithm gets a program $P$ and a finite set $p$ of templates
resulting from analyses of some cycles in $P$.
Lines~\ref{alg:executeSymbolically:InitialState}--\ref{alg:executeSymbolically:Root}
create an initial program state, insert it into a queue $Q$, and create the
root of a symbolic execution tree $T$ labelled by the state. 

The queue $Q$ keeps all the program states waiting for their processing in
the \textbf{repeat}-\textbf{until} loop (lines
\ref{alg:executeSymbolically:LoopBegin}--\ref{alg:executeSymbolically:LoopEnd}). The
key part of the loop's body begins at
line~\ref{alg:executeSymbolically:GetTemplatesAt}, where we select at most
one template of $p$ with entry location matching the actual program location
$s.l$. Note that there can be more than one template available at $s.l$ as
more cyclic paths can go through the location. We do not put any constraints
in the selection strategy. We may for example choose randomly. Also note
that we may choose none of the templates (i.e.~we select \textbf{null}), if
there is no template in $p$ for location $s.l$ or even if there are such
templates in $p$.
If a template $t=(s.l,M)$ is selected, then we get a fresh parameter
(line~\ref{alg:executeSymbolically:UnrollingFreshKappa}) and replace the
original parameter in all tuples of $M$ by the fresh one. This replacement
prevents collisions of parameters of already applied templates. The
\textbf{foreach} loop at
lines~\ref{alg:executeSymbolically:Unrolling}--\ref{alg:executeSymbolically:UnrollingEnd}
creates a successor state $s'$ for each program state in $M$.
If the template selection at
line~\ref{alg:executeSymbolically:GetTemplatesAt} returns \texttt{null}, we
proceed to line~\ref{alg:executeSymbolically:ClassicStep} and compute
successor states of the state $s$ by the classic symbolic execution.  The
successor states with provably satisfiable path conditions are inserted into
the queue $Q$ and into the compact symbolic execution tree $T$ in the
\textbf{foreach} loop at
lines~\ref{alg:executeSymbolically:AddSATSuccessors}--\ref{alg:executeSymbolically:CreateVertex}.
The successor states with provably unsatisfiable path conditions are ignored
as they correspond to infeasible paths. The \textbf{foreach} loop at
lines~\ref{alg:executeSymbolically:LailedLeafBegin}--\ref{alg:executeSymbolically:LailedLeafEnd}
handles the successor states with path conditions for which we are unable to
decide satisfiability; these states are inserted into the resulting tree $T$
as so-called \emph{failed leaves}. A presence of a failed leaf in the
resulting tree indicates that applied symbolic execution has failed to
explore whole path-space of the executed program.  We do not continue
computation from these states as there is usually a plethora of other states
with provably satisfiable path conditions.

\begin{algorithm}[!t]
    \caption{\texttt{executeSymbolically}} \label{alg:executeSymbolically}
    \KwIn{a program $P$ to be executed 
        \KwInSep \hspace{-1.1cm}\lmark{$\Box$}\hspace{0.97cm}
         and a finite set $p$ of templates computed for cycles in $P$
    }
    \KwOut{a symbolic execution tree $T$ of $P$ (compact tree in $\Box$--version)}
    \BlankLine
    \DontPrintSemicolon
    $s_0 \aset (\text{the starting location of }P,\theta_I,\true)$\;        
    \label{alg:executeSymbolically:InitialState}
    Let $Q$ be a queue of states initially containing only $s_0$ \;
        \label{alg:executeSymbolically:InitialQ}
    Insert the root node labelled by $s_0$ to the empty tree $T$\;
        \label{alg:executeSymbolically:Root}
    \Repeat{\rm $Q$ becomes empty}{ \label{alg:executeSymbolically:LoopBegin}
        Extract the first state $s$ from $Q$ \;
        \If{\rm $s.l$ is either an exit from $P$ or an error location}{
          \textbf{continue}
        }
            $S \aset \emptyset$ \;
            \lmark{$\Box$}
            $t \aset \texttt{chooseTemplate(}s.l,p\texttt{)}$ \;
                \label{alg:executeSymbolically:BoxesBegin}
                \label{alg:executeSymbolically:GetTemplatesAt}
                \label{alg:executeSymbolically:ChooseTemplate}
            \lmark{$\Box$}
            \If {\rm $t \neq $ \bf{null}}{
                \lmark{$\Box$}
                Let $M$ be the second element of $t$, i.e.~$t=(s.l,M)$\; 
                \lmark{$\Box$}
                $\kappa \aset \texttt{getFreshParam()}$ \;
                    \label{alg:executeSymbolically:UnrollingFreshKappa}
                \lmark{$\Box$}
                Replace all occurrences of the former parameter in
                    $M$ by $\kappa$\;
                \lmark{$\Box$}
                \ForEach{$(l,\theta\prm{\kappa},\varphi\prm{\kappa})\in M$}{
                        \label{alg:executeSymbolically:Unrolling}
                    \lmark{$\Box$}
                    $s' \aset s
                        \compose(l,\theta\prm{\kappa},\varphi\prm{\kappa})$\;
                    \label{alg:executeSymbolically:SuccessorInUnrolling}
                    \lmark{$\Box$}
                    Insert $s'$ into $S$\;
                        \label{alg:executeSymbolically:UnrollingEnd}
                }
            }
            \lmark{$\Box$}
            \Else(~~/* apply classic symbolic execution step */){
                    \label{alg:executeSymbolically:BoxesEnd}
                $S \aset \texttt{computeClassicSuccessors(}P,s\texttt{)}$ \;
                    \label{alg:executeSymbolically:ClassicStep}
            }
            Let $u$ be the leaf of $T$ whose label is $s$ \;
                    \label{alg:executeSymbolically:GetLeaf}
            \ForEach{\rm state $s' \in S$ such that $\sat(s'.\varphi)=$ SAT}{
                    \label{alg:executeSymbolically:AddSATSuccessors}
                    Insert $s'$ at the end of $Q$ \; 
                    Insert a new node $v$ labelled with $s'$ and 
                    a new edge $(u,v)$ into $T$ \;
                    \label{alg:executeSymbolically:CreateVertex}
            }
            \ForEach{\rm state $s'\in S$ such that $\sat(s'.\varphi)=$ UNKNOWN}{
                    \label{alg:executeSymbolically:LailedLeafBegin}
                    Insert a new node $v$ labelled with $s'$ and 
                    a new edge $(u,v)$ into $T$ \;
                    Mark the node $v$ in $T$ as a failed leaf
                    \label{alg:executeSymbolically:LailedLeafEnd}
            }
    } \label{alg:executeSymbolically:LoopEnd}
    \Return{$T$} \;
\end{algorithm}


We finish this section by soundness and completeness theorems for compact
symbolic execution. We assume that $T$ and $T'$ are classic and compact
symbolic execution trees of the program $P$ computed by
Algorithm~\ref{alg:executeSymbolically} without and with $\Box$-lines
respectively. The theorems hold on assumption that our $\sat(\varphi)$
function never returns UNKNOWN, i.e.~neither $T$ nor $T'$ contains failed
leaves. Proofs of both theorems are in Appendix~\ref{sec:Proofs}.

\begin{thm}[Soundness]
  For each leaf node $e \in T$ there is a leaf node $e' \in T'$ and a
  valuation $\bld{\nu}$ of parameters in $e'\!.s$ such that $e.s \equiv
  e'\!.s\prm{\bld{\nu}}$.
\end{thm}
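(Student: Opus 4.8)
The plan is to prove a stronger statement by induction along the unique path from the root of $T$ to the given leaf $e$, simultaneously constructing a matching path in $T'$. Concretely, I would exhibit synchronisation nodes $u_0,\dots,u_m=e$ on the root-to-$e$ path in $T$, nodes $u'_0,\dots,u'_m$ in $T'$ with each $u'_j$ an ancestor of $u'_{j+1}$, and an increasing chain of partial valuations $\bld{\nu}_0\subseteq\cdots\subseteq\bld{\nu}_m=\bld{\nu}$ such that $u_j.s\equiv u'_j.s\prm{\bld{\nu}_j}$ for every $j$. The base case is immediate: the roots of $T$ and $T'$ are both labelled by $(\text{start},\theta_I,\true)$, so they are equivalent under the empty valuation. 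Taking $e'=u'_m$ and $\bld{\nu}=\bld{\nu}_m$ at the end yields the theorem.

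For the inductive step, assume $u_j.s\equiv u'_j.s\prm{\bld{\nu}_j}$ with $u_j\neq e$, and split on what Algorithm~\ref{alg:executeSymbolically} does at $u'_j$ in the $\Box$-version. If $\texttt{chooseTemplate}$ returns \textbf{null}, a classic step is taken; since a classic step depends only on the state and is preserved by $\equiv$ and by valuation, the classic successors of $u'_j$ correspond branch by branch to those of $u_j$, so the successor $v$ on the root-to-$e$ path in $T$ has a counterpart $v'$ in $T'$ with $v.s\equiv v'.s\prm{\bld{\nu}_j}$, and I set $u_{j+1}=v$, $u'_{j+1}=v'$, $\bld{\nu}_{j+1}=\bld{\nu}_j$. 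Because $v'.\varphi\prm{\bld{\nu}_j}\equiv v.\varphi$ is satisfiable and $\sat$ never answers UNKNOWN, $v'$ is genuinely inserted into $T'$. If instead a template $(u'_j.l,M)$ is applied, then $u_j.l=u'_j.l=e$ is a cycle entry, and I apply (L1) to the sub-path of $T$ from $u_j$ to the leaf $e$: it supplies a node $w$ strictly past $u_j$ on that sub-path, an index $i$, and $\nu\geq0$ with $w.s\equiv u_j.s\compose(l_i,\theta_i\prm{\nu},\varphi_i\prm{\nu})$. In $T'$, the template application creates the successor $w'$ with $w'.s=u'_j.s\compose(l_i,\theta_i\prm{\kappa},\varphi_i\prm{\kappa})$ for a fresh $\kappa$; I extend the valuation by $\bld{\nu}_{j+1}=\bld{\nu}_j\cup\{\kappa\mapsto\nu\}$ and set $u_{j+1}=w$, $u'_{j+1}=w'$.

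The heart of this step is the identity $w'.s\prm{\bld{\nu}_{j+1}}\equiv w.s$, which I would establish by observing that valuation distributes over $\compose$, that the fresh $\kappa$ does not occur in $u'_j.s$ (so $u'_j.s\prm{\bld{\nu}_{j+1}}=u'_j.s\prm{\bld{\nu}_j}$), and that $\equiv$ is a congruence for $\compose$; chaining these with the induction hypothesis and (L1) gives $w'.s\prm{\bld{\nu}_{j+1}}\equiv u_j.s\compose(l_i,\theta_i\prm{\nu},\varphi_i\prm{\nu})\equiv w.s$. As before, $w'.\varphi\prm{\bld{\nu}_{j+1}}\equiv w.\varphi$ is satisfiable, so $w'$ survives the satisfiability filter and is present in $T'$. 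In both cases the synchronisation node advances strictly along the finite root-to-$e$ path, so the induction terminates at $u_m=e$. Finally, since $e$ is a leaf of $T$ its location is terminal (an exit or error location—at a non-terminal location a node with satisfiable path condition always has a feasible successor, as the branching conditions are exhaustive), and as $u'_m.l=e.l$ is terminal the compact algorithm skips $u'_m$, making it a leaf of $T'$.

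The main obstacle is the template step: everything hinges on (L1) furnishing a node of $T$ whose state is exactly the composition of $u_j.s$ with a template instance, so that the single template jump in $T'$ can be matched against the possibly long cycle-unrolling segment in $T$. Granting (L1), the remaining difficulty is pure bookkeeping—checking that $\compose$ commutes with valuation and respects $\equiv$, and that each matched compact node passes the $\sat$ filter—so the technically delicate but routine part consists of those congruence lemmas for $\compose$ together with the freshness argument for $\kappa$.
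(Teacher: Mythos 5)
Your proposal is correct and follows essentially the same route as the paper's proof: induction along the root-to-leaf path in $T$, with the base case at the identically labelled roots, a case split on whether \texttt{chooseTemplate} returns a template or \textbf{null}, property (L1) of the Template Properties theorem handling the template jump, and the composition/valuation congruences (the paper's ``Equivalent Compositions'' lemma, in particular $s\prm{\bld{\nu}} \compose s'\prm{\bld{\nu}'} \equiv (s \compose s')\prm{\bld{\nu}\cup\bld{\nu}'}$) plus satisfiability of $w.\varphi$ to conclude the matched compact node passes the $\sat$ filter. Your closing argument that $e'$ is genuinely a leaf of $T'$ (leaves of $T$ sit at terminal locations, which the compact algorithm skips) is actually spelled out more carefully than in the paper, which leaves that point implicit.
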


\begin{thm}[Completeness]
For each leaf node $e' \in T'$ there is a leaf node $e \in T$ and
a valuation $\bld{\nu}$ of parameters in $e'\!.s$ such that $e.s \equiv
e'\!.s\prm{\bld{\nu}}$.
\end{thm}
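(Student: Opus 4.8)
The plan is to prove completeness by tracing each root-to-leaf path of the compact tree $T'$ into a root-to-leaf path of the classic tree $T$, mimicking every classic step directly and invoking property (L2) of the Template Properties theorem at every template application.

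First I would fix a compact leaf $e' \in T'$. Since $e'$ carries a provably satisfiable path condition and is not a failed leaf, $e'.\varphi$ is satisfiable as a formula over the symbols and the finitely many parameters $\kappa_1,\ldots,\kappa_m$ introduced by the template applications on the path from the root to $e'$; each such parameter occurs under a conjunct $\kappa_j \ge 0$ inherited from the corresponding $\varphi_*\prm{\kappa_j}$. A satisfying assignment therefore restricts to a valuation $\bld{\nu}$ sending every $\kappa_j$ to a non-negative integer and making $e'.\varphi\prm{\bld{\nu}}$ satisfiable; I fix this $\bld{\nu}$. Because path conditions grow only by conjunction along any branch, $e'.\varphi$ logically implies $s.\varphi$ for every state $s$ on the path to $e'$, so $s.\varphi\prm{\bld{\nu}}$ is satisfiable for each such prefix state.

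Next I would argue by induction along the states $s_0,s_1,\ldots,s_k = e'.s$ on the path to $e'$, building a node $u_j \in T$ with $u_j.s \equiv s_j\prm{\bld{\nu}}$. The base case is immediate, as $s_0$ has no parameters and equals the root state of $T$. For the step, if $s_{j+1}$ arises from $s_j$ by a classic step (no template was selected), then since $u_j.s \equiv s_j\prm{\bld{\nu}}$ the same step is enabled at $u_j$, and the successor matching $s_{j+1}$ has path condition equivalent to $s_{j+1}.\varphi\prm{\bld{\nu}}$, which is satisfiable, so it is a genuine node $u_{j+1}$ of $T$. If instead $s_{j+1} = s_j \compose (l_i,\theta_i\prm{\kappa_j},\varphi_i\prm{\kappa_j})$ comes from a template application with $u_j.l = s_j.l = e$, I set $\nu = \bld{\nu}(\kappa_j) \ge 0$ and apply (L2) to $u_j$: its hypothesis that $u_j.\varphi \wedge u_j.\theta\ese{\varphi_i\prm{\nu}}$ be satisfiable holds because, using that composition commutes with valuation (the fresh $\kappa_j$ is disjoint from the parameters of $s_j$) and that $\equiv$ is a congruence for $\compose$, this formula is equivalent to $s_{j+1}.\varphi\prm{\bld{\nu}}$, already known satisfiable. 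Property (L2) then yields a descendant $u_{j+1}$ of $u_j$ with $u_{j+1}.s \equiv u_j.s \compose (l_i,\theta_i\prm{\nu},\varphi_i\prm{\nu}) \equiv s_j\prm{\bld{\nu}} \compose (l_i,\theta_i\prm{\nu},\varphi_i\prm{\nu}) \equiv s_{j+1}\prm{\bld{\nu}}$. The two auxiliary facts used here, namely $(s_1 \compose s_2)\prm{\bld{\nu}} \equiv s_1\prm{\bld{\nu}} \compose s_2\prm{\bld{\nu}}$ for parameter-disjoint states and the congruence of $\equiv$ under $\compose$, follow directly from the definitions of composition and valuation as substitutions.

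Finally I would take $e = u_k$ and verify it is a leaf of $T$. If $e'.l$ is an exit or error location, then $u_k.l = e'.l$ is too, so $u_k$ is a leaf; if $e'$ is a leaf because all its classic successors are unsatisfiable, then the matching successors of $u_k$ have path conditions equivalent to the $\bld{\nu}$-instances of those unsatisfiable formulas and are hence unsatisfiable, so $u_k$ is again a leaf. I expect the main obstacle to be the remaining boundary case in which $e'$ is a leaf because a template was applied but every exit state composed unsatisfiably (in particular when the cycle has no exit, so $M=\emptyset$): this corresponds to a non-terminating cycle whose classic subtree below $u_k$ is infinite and leaf-free, so there is literally no classic leaf to match. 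Handling it cleanly requires either restricting attention to cycles that can eventually exit, or arguing that such degenerate leaves carry no reachable terminating behaviour and may be excluded from the leaf-to-leaf correspondence; I would isolate this case explicitly and dispatch the generic terminating case by the induction above.
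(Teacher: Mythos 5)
Your proof is correct and reaches the paper's result by a slightly but genuinely different induction. The paper also inducts along the root-to-leaf path $\pi'$ in $T'$ and invokes (L2) at template steps, but its induction hypothesis is set-valued and universally quantified over valuations: it carries a set $U$ of classic nodes such that \emph{every} valuation $\bld{\nu}$ making $u'.\varphi\prm{\bld{\nu}}$ satisfiable is matched by some $u \in U$, and at each template application it recovers the needed prefix valuation by restriction, $\bld{\nu}' = \bld{\nu} \smallsetminus \{(\kappa,\nu)\}$. You instead fix a single valuation once and for all from a model of the leaf condition $e'.\varphi$, and justify that it serves every prefix by observing that path conditions grow only by conjunction (and that each $\kappa_j \geq 0$ survives the symbol substitutions, so the model assigns non-negative integers). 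This buys you leaner bookkeeping — you track one classic node $u_j$ with $u_j.s \equiv s_j\prm{\bld{\nu}}$ rather than the sets $U, W$ — at the cost of the extra monotonicity observation; the paper's stronger invariant avoids that observation and localises all valuation manipulation to the template case. Both arguments rest on the same two auxiliary facts you cite, which are items (4)--(5) of the paper's Lemma~1 (Equivalent Compositions).

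One further remark: the boundary case you isolate at the end is real, and the paper's proof simply does not address it. The paper's induction, like yours, produces a node of $T$ equivalent to $e'.s\prm{\bld{\nu}}$ but never verifies that this node is a \emph{leaf}; that is immediate only when $e'.l$ is an exit or error location. A leaf of $T'$ created because a template application yielded no satisfiable successor (e.g.\ $M = \emptyset$ for an exitless cycle, or all composed exit conditions unsatisfiable for a never-terminating loop) corresponds in $T$ to an infinite, leaf-free branch, so the leaf-to-leaf statement degenerates there — arguably this is meant to be absorbed by the paper's closing remark that only finite branches are related. Your explicit flagging and proposed dispatch of this case is therefore not a weakness of your proof relative to the paper's; if anything, it is more careful.
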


Note that in both theorems we discuss only the relation between all
\emph{finite} branches of the trees $T$ and $T'$. Some infinite branches of
$T$ (like the one in Figure~\ref{fig:linSrch}(b)) corresponding to infinite
iterations along a cyclic path need not be present in $T'$. As symbolic
execution is typically used to cover as many reachable program locations as
possible, missing infinite iterations along cyclic paths can be seen as a
feature rather than a drawback.


%
%



\section{Experimental Results}\label{sec:experiments}

\subsubsection{Implementation} We have implemented both classic and compact
symbolic execution in an experimental tool called \textsc{rudla}. The tool
uses our ``library of libraries'' called \textsc{bugst} available at
\textsc{SourceForge}~\cite{BUGST}. The sources of \textsc{Rudla} and all
benchmarks mentioned below are available in the same repository. The
implementation also uses \textsc{clang} 2.9~\cite{CLANG}, \textsc{LLVM}
3.1~\cite{LLVM}, and \textsc{Z3} 4.3.0~\cite{Z3}. 

\subsubsection{Evaluation Criteria}
We would like to empirically evaluate and compare the effectiveness of the
classic and compact symbolic execution in exploration of program
paths. Unfortunately, we cannot directly compare explored program paths or
nodes in the constructed trees as a path or a node in a compact symbolic
execution tree have a different meaning than a path or a node in a classic
symbolic execution tree. To compare the techniques, we fix an exploration
method of the trees, namely we choose the breadth-first search as indicated
in Algorithm~\ref{alg:executeSymbolically}, and we measure the time needed
by each of the techniques to reach a particular location in an analysed
program. Note that for compact symbolic execution we also have to fix a
strategy for template selection since there can generally be more than one
template related to one program location. We always choose randomly between
candidate templates.

\subsubsection{Benchmarks and Results}
We use two collections of benchmarks. The first collection contains 13
programs with a marked target location. As our technique is focused on path
explosion caused by loops, all the benchmarks contain typical program loop
constructions. There are sequences of loops, nested loops and also loops
with internal branching. They are designed to produce a huge number of
execution paths. Thus they are challenging for symbolic execution. The
target location is chosen to be difficult to reach.  The first ten
benchmarks have reachable target locations, while the last three do not. For
these three benchmarks, all the execution paths must be explored to give an
answer.

Experimental results of both compact and classic symbolic executions
are presented in Table~\ref{tab:Expriments}. The high numbers of cycles are
due to our translation from LLVM (see Appendix~\ref{sec:ManyDetected} for
more details). The discrepancy between the numbers of detected cycles and
computed templates is mainly due to infeasability of many cycles (see
line~\ref{alg:template:satCheck1} of Algorithm~\ref{alg:template}).

 We want to highlight the
following observations. First, classic symbolic execution was faster
only for benchmarks \texttt{Hello} and \texttt{decode\_packets}. Second, the
number of states visited by the compact symbolic execution is often several
orders of magnitude lower than the number of states visited by the classic
one. At the same time we recall that the semantics of a state in classic and
compact symbolic execution are different.
Finally, presence of quantifiers in path conditions of compact symbolic
executions puts high requirements on skills of the SMT solver. This
leads to SMT failures, which are not seen in classic symbolic execution.

Algorithm~\ref{alg:executeSymbolically} saves SMT failures in
the form of failed leaves in the resulting compact symbolic execution
tree. Therefore, we may think about subsequent analyses for these
leaves. For example, in a failed leaf we may instantiate parameters $\kappa$
by concrete numbers. The resulting formulae will become quantifier-free and
therefore potentially easier for an SMT solver. This way we might be able to
explore paths below the failed leaves. But basically, analyses of failed
leaves are a topic for our further research. Moreover, as SMT solvers are
improving quickly, we may expect that counts of the failures will decrease
over time.

\begin{table*}[!t]
  \begin{center}
    \setlength{\tabcolsep}{4pt}
    \renewcommand\arraystretch{1.06}
    \begin{tabular}{|c||c|c|c||c|c|c||c|c||}
      \hline
      & \multicolumn{3}{c||}{\textbf{Templates}}
      & \multicolumn{3}{c||}{\textbf{Compact SE}}
      & \multicolumn{2}{c||}{\textbf{SE}} \\
      \cline{2-9}
      \textbf{Benchmark} &
       Time & Count & Cycles &
       Time & States & SMTFail &
       Time & States \\
       \hline
        hello          & 12.3 & 2 & 126 & 2.3   & 187   & 0  & 4.5 & 2262\\
        HW             & 31.9 & 4 & 252 & 45.4  & 1048  & 4  & T/O & 223823\\
        HWM            & 48.1 & 5 & 336 & T/O   & 5125  & 24 & T/O & 162535\\
        matrIR         & 4.2  & 4 & 28  & 82.9  & 1234  & 6  & T/O & 270737\\
        matrIR\_dyn    & 14.8 & 5 & 30  & 240.5 & 2472  & 13 & T/O & 267636\\
        VM             & 8.6  & 6 & 64  & T/O   & 2274  & 64 & T/O & 205577\\
        VMS            & 4.2  & 3 & 32  & 5.4   & 466   & 0  & 99.8 & 281263\\
        decode\_packets& 18.3 & 5 & 26  & 39.9  & 1276  & 0  & 16.3 & 8992 \\
        WinDriver      & 17.8 & 5 & 26  & 59.2  & 1370  & 1  & T/O & 206903\\
        EQCNT          & 12.2 & 3 & 12  & 10.6  & 345   & 0  & T/O & 179803\\
      \hline
        EQCNTex        & 5.8  & 4 & 24  & T/O   & 10581 & 0  & T/O & 251061\\
        OneLoop        & 0.1  & 1 & 2   & 0.1   & 41    & 0  & T/O & 38230\\
        TwoLoops       & 0.3  & 2 & 4   & 0.1   & 25    & 0  & T/O & 917343\\
      \hline
      \hline
        \textbf{Total time} &
        \multicolumn{3}{c||}{240} &
        \multicolumn{3}{c||}{1800} &
        \multicolumn{2}{c||}{3900} \\
      \hline
    \end{tabular}
  \end{center}
  \caption{Experimental results of compact and classic symbolic  
    executions. The compact symbolic execution approach is divided into 
    computation of
    templates and building of compact symbolic execution tree.
    All the times are in seconds, where 'T/O' identifies
    exceeding 5 minutes timeout. 'Count' represents the number of computed
    templates, 'Cycles' shows the number of detected cycles.
    'SMTFail' represents the number of failed SMT queries. There was no SMT
    failure during classic SE of our benchmarks.}
  \label{tab:Expriments}
\end{table*}

\begin{table*}[!t]
  \begin{center}
    \setlength{\tabcolsep}{4pt}
    \renewcommand\arraystretch{1.1}
    \begin{tabular}{|c||c|c|c|c|c|c|}
      \hline
         &{\textbf{Time}}&{\textbf{safe}}&{\textbf{unsafe}}&
         {\textbf{timeout}}&{\textbf{unsupported}}&{\textbf{points}} \\
      \hline\hline
      {\textbf{Compact SE}}& 300+4920 &  21  &  25  &   15    &   13+5 & 67\\
      \hline
      {\textbf{SE}}        &   8700   &  10  &  27  &   28    &   13+1 & 47 \\
      \hline
    \end{tabular}
  \end{center}
  \caption{Experimental results of compact and classic symbolic
    executions on 79 SV-COMP 2013 benchmarks in the category 'loops'. Time is
    in seconds. For compact SE we provide template 
    computation time plus execution time. 'safe' and 'unsafe' report the numbers of
    programs where the tool decides unreachability and reachability of a
    marked error location, respectively (all these answers are correct).
    'timeout' presents the number of symbolic executions exceeding 5 minutes.
    'unsupported' represents the number of compilation failures plus
    failures during an analysis. 'points' shows the number of points the
    tools would get according to the SV-COMP 2013 rules.}
\label{tab:ExprimentsSVCOMP}
\end{table*}

The second collection of benchmarks is the whole category 'loops' taken
from SV-COMP 2013 (revision 229)~\cite{B13}. The results are depicted in
Table~\ref{tab:ExprimentsSVCOMP}.

All the presented experiments were done on a laptop Acer Aspire 5920G (2
$\times$ 2GHz, 2GB) running Windows 7 SP1 64-bit.



\section{Related Work}\label{sec:related}

The symbolic execution was introduced by King in 1976~\cite{Kin76}.
The original concept was generalised in~\cite{KPV03} for programs with heap
by introducing lazy initialisation of dynamically allocated data structures.
The lazy initialisation algorithm was further improved and formally defined
in~\cite{DLR11}. Another generalisation step was done in~\cite{KS05}, where
the authors attempt to avoid symbolic execution of library code (called from
an analysed program), since such code can be assumed as well defined and
properly tested.

In~\cite{SPmCS09,GL11}, the path explosion problem is tackled by focusing on
program loops. The information inferred from a loop allows to talk about
multiple program paths through that loop. But the goal is to explore classic
symbolic execution tree in some effective manner: more interesting paths
sooner. Approaches~\cite{G07,AGT08} share the same goal as the previous
ones, but they focus on a computation of function summaries rather than on
program loops. 

Our goal is completely different: instead of guiding exploration of paths in
a classic symbolic execution tree, we build a tree that keeps the same
information and contains less nodes.
%
%
%
In particular, templates of compact symbolic execution have a different
objective than summarisation used in~\cite{G07,AGT08,GL11}. While
summarisation basically caches results of some finite part of symbolic
execution for later fast reuse, our templates are supposed to \emph{replace}
potentially infinite parts of symbolic executions by a single node.

Techniques~\cite{QNR11,SH10} group paths of classic symbolic execution
tree according to their effect on symbolic values of a priori given
output variables, and explore only one path per group. We consider all
program variables and we explore all program paths (some of them are
explored simultaneously using templates).

Finally, in our previous work~\cite{ST12} we compute a non-trivial necessary
condition for reaching a given target location in a given program. In other
words, the result of the analysis is a first order logic formula. In the
current paper, we focus on a fast exploration of as many execution paths as
possible. The technique produces a compact symbolic execution tree. Note
that we do not require any target location, since we do not focus on a
program location reachability here. Nevertheless, to achieve our goal, we
adopted a part of a technical stuff introduced in~\cite{ST12}. Namely, lines
\ref{alg:template:memoryBegin}--\ref{alg:template:repeatEnd} of
Algorithm~\ref{alg:template} are similar to the computation of a so-called
iterated memory, which is in~\cite{ST12} an over-approximation of the memory
content after several iterations in a program loop. In the current
technique, the memory content must always be absolutely precise. Moreover,
here we analyse flowgraph cycles while~\cite{ST12} summarises program loops.


%


\section{Conclusion}\label{sec:conclusion}

We have introduced a generalisation of classic symbolic execution, called
compact symbolic execution. Before building symbolic execution tree, the
compact symbolic execution computes \emph{templates} for cycles of an
analysed program. A template is a parametric and declarative description of
the overall effect of a related cycle. Our experimental results indicate
that the use of templates during the analysis leads to faster exploration of
program paths in comparison with the exploration speed of classic symbolic
execution. Also a number of symbolic states computed during the program
analysis is considerably smaller. On the other hand, compact symbolic
execution constructs path conditions with quantifiers, which leads to more
failures of SMT queries.

\bibliographystyle{plain}

\begin{thebibliography}{10}

\bibitem{AGT08}
S.~Anand, P.~Godefroid, and N.~Tillmann.
\newblock Demand-driven compositional symbolic execution.
\newblock In {\em TACAS}, volume 4963 of {\em LNCS}, pages 367--381. Springer,
  2008.

\bibitem{B13}
D.~Beyer.
\newblock Second competition on software verification.
\newblock In {\em TACAS}, volume 7795 of {\em LNCS}, pages 594--609. Springer,
  2013.

\bibitem{BUGST}
bugst.
\newblock \texttt{http://sourceforge.net/projects/bugst}.

\bibitem{CLANG}
clang.
\newblock \texttt{http://clang.llvm.org}.

\bibitem{LLVM}
{LLVM}.
\newblock \texttt{http://llvm.org}.

\bibitem{Z3}
Z3.
\newblock \texttt{http://z3.codeplex.com}.

\bibitem{CDE08}
C.~Cadar, D.~Dunbar, and D.~Engler.
\newblock {KLEE}: Unassisted and automatic generation of high-coverage tests
  for complex systems programs.
\newblock In {\em OSDI}, pages 209--224. USENIX Association, 2008.

\bibitem{Cad08}
C.~Cadar, V.~Ganesh, P.~M. Pawlowski, D.~L. Dill, and D.~R. Engler.
\newblock {EXE}: Automatically generating inputs of death.
\newblock In {\em CCS}, pages 322--335. ACM, 2006.

\bibitem{DLR11}
X.~Deng, J.~Lee, and Robby.
\newblock Efficient and formal generalized symbolic execution.
\newblock {\em Autom. Softw. Eng.}, 19(3):233--301, 2012.

\bibitem{G07}
P.~Godefroid.
\newblock Compositional dynamic test generation.
\newblock In {\em POPL}, pages 47--54. ACM, 2007.

\bibitem{GLM08}
P.~Godefroid, M.~Y. Levin, and D.~A. Molnar.
\newblock Automated whitebox fuzz testing.
\newblock In {\em NDSS}, pages 151--166. The Internet Society, 2008.

\bibitem{GL11}
P.~Godefroid and D.~Luchaup.
\newblock Automatic partial loop summarization in dynamic test generation.
\newblock In {\em ISSTA}, pages 23--33. ACM, 2011.

\bibitem{How77}
W.~E. Howden.
\newblock Symbolic testing and the {DISSECT} symbolic evaluation system.
\newblock {\em IEEE Trans. Software Eng.}, 3:266--278, 1977.

\bibitem{KPV03}
S.~Khurshid, C.~S. P\u{a}s\u{a}reanu, and W.~Visser.
\newblock Generalized symbolic execution for model checking and testing.
\newblock In {\em TACAS}, volume 2619 of {\em LNCS}, pages 553--568. Springer,
  2003.

\bibitem{KS05}
S.~Khurshid and Y.~L. Suen.
\newblock Generalizing symbolic execution to library classes.
\newblock In {\em PASTE}, pages 103--110. ACM, 2005.

\bibitem{Kin76}
J.~C. King.
\newblock Symbolic execution and program testing.
\newblock {\em Commun. ACM}, 19(7):385--394, 1976.

\bibitem{QNR11}
D.~Qi, H.~D.~T. Nguyen, and A.~Roychoudhury.
\newblock Path exploration based on symbolic output.
\newblock In {\em ESEC/FSE}, pages 278--288. ACM, 2011.

\bibitem{SH10}
R.~A. Santelices and M.~J. Harrold.
\newblock Exploiting program dependencies for scalable multiple-path symbolic
  execution.
\newblock In {\em ISSTA}, pages 195--206. ACM, 2010.

\bibitem{SPmCS09}
P.~Saxena, P.~Poosankam, S.~McCamant, and D.~Song.
\newblock Loop-extended symbolic execution on binary programs.
\newblock In {\em ISSTA}, pages 225--236. ACM, 2009.

\bibitem{ST12}
J.~Strej\v{c}ek and M.~Trt\'{\i}k.
\newblock Abstracting path conditions.
\newblock In {\em ISSTA}, pages 155--165. ACM, 2012.

\bibitem{TdH08}
N.~Tillmann and J.~de~Halleux.
\newblock Pex -- white box test generation for .{NET}.
\newblock In {\em TAP}, volume 4966 of {\em LNCS}, pages 134--153. Springer,
  2008.

\end{thebibliography}

\newpage
\appendix


\section{Cycles and Program Loops}
\label{sec:LoopsCycles}

We illustrate the difference between cycles formally defined in
Definition~\ref{def:Part} and loop constructs of programming languages using
two short examples. The examples are instances of two common loop
structures: Figure~\ref{fig:cse:LoopsCycles1} shows a loop with an internal
branching and Figure~\ref{fig:cse:LoopsCycles2} presents a code with two
nested program loops. According to our definition of a cycle, the core of a
cycle is a single cyclic path in the graph sense (satisfying some additional
conditions). The flowgraph of Figure~\ref{fig:cse:LoopsCycles1} contains
four cycles while the flowgraph of Figure~\ref{fig:cse:LoopsCycles2} has
even seven cycles. Cores of these cycles are listed in captions of the
figures. One can immediately see that there is no one-to-one correspondence
between loops in a source code and cycles in the flowgraphs.

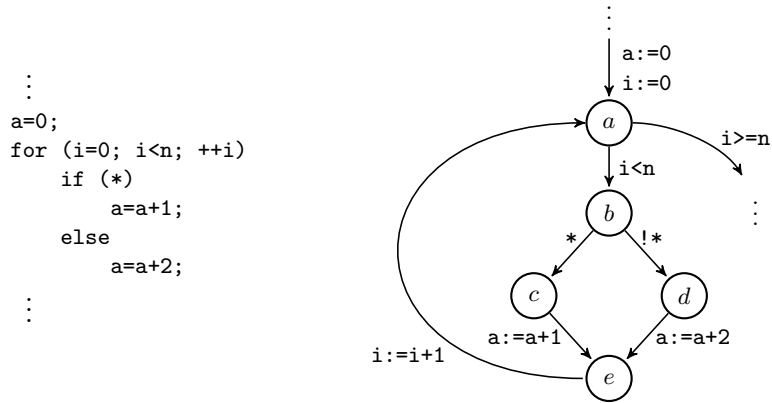
\begin{figure}[hb]
  \begin{center}
    \begin{tabular}{ccc}
      \begin{tabular}[b]{l}
        \texttt{~\vdots}\\
        \texttt{a=0;}\\
        \texttt{for (i=0; i<n; ++i)}\\
        \texttt{~~~~if (*)}\\
        \texttt{~~~~~~~~a=a+1;}\\
        \texttt{~~~~else}\\
        \texttt{~~~~~~~~a=a+2;}\\
        \texttt{~\vdots}\\[5ex]
        ~
      \end{tabular}
      & ~~~~~ &
      \begin{tabular}[b]{c}
        \tikzstyle{loc} = [circle,thick,draw,minimum size=6mm] 
        \tikzstyle{pre} = [<-,shorten <=1pt,>=stealth',semithick]
        \tikzstyle{post} = [->,shorten <=1pt,>=stealth',semithick]
        \begin{tikzpicture}[node distance=1.1cm]   
          \node [] (0) {\vdots};
          \node [loc] (a) [below of=0,yshift=-4mm] {$a$}
            edge [pre] node [label=right:
            \begin{tabular}{l}
              \texttt{a:=0}\\\texttt{i:=0}
            \end{tabular}] {} (0);
          \node [loc] (b) [below of=a,yshift=-1mm] {$b$}
            edge [pre] node [label=right:\texttt{i<n}] {} (a);
          \node [loc] (c) [below of=b,xshift=-10mm] {$c$}
            edge [pre] node [label=above:\texttt{*}] {} (b);
          \node [loc] (d) [below of=b,xshift=10mm] {$d$}
            edge [pre] node [label=above:~\texttt{!*}] {} (b);
          \node [loc] (e) [below of=c,xshift=10mm] {$e$}
            edge [pre] node [label=left:\texttt{a:=a+1}] {} (c)
            edge [pre] node [label=right:\texttt{a:=a+2}] {} (d)
            edge [post,out=180,in=180,looseness=2.5]
              node [overlay,near start,below] {\texttt{i:=i+1}~~~~~~~~~~} (a);
           \node [] (00) [below of=a,xshift=20mm] {\vdots~~}
            edge [pre,in=0,out=120,looseness=0.8] node [label=right:~~\texttt{i>=n}] {} (a);
        \end{tikzpicture}
      \end{tabular}
    \end{tabular}
  \end{center}
  \caption{A source code and a flowgraph of a single program loop with an
    internal branching. The symbol \texttt{*} represents any branching
    condition. The flowgraph has four independent cycles with cores
    \textit{abcea}, \textit{abdea}, \textit{eabce}, and \textit{eabde}.}
  \label{fig:cse:LoopsCycles1}
\end{figure}

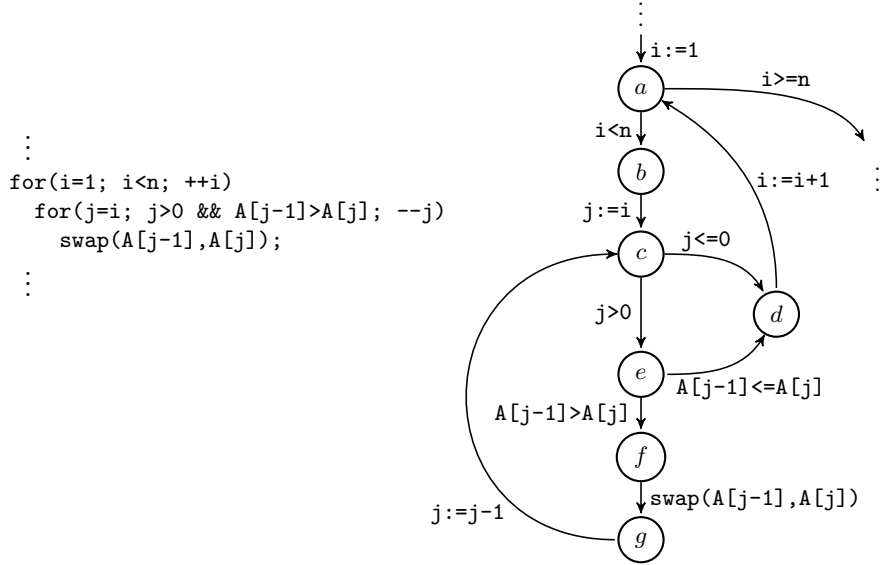
\begin{figure}[!htb]
  \begin{center}
    \begin{tabular}{cc}
      \begin{tabular}[b]{l}
        \texttt{~\vdots}\\
        \texttt{for(i=1; i<n; ++i)}\\
        \texttt{~~for(j=i; j>0 \&\& A[j-1]>A[j]; --j)}\\
        \texttt{~~~~swap(A[j-1],A[j]);}\\
        \texttt{~\vdots}\\[23.3ex]
        ~
      \end{tabular}
      \begin{tabular}[b]{c}
        \tikzstyle{loc} = [circle,thick,draw,minimum size=6mm] 
        \tikzstyle{pre} = [<-,shorten <=1pt,>=stealth',semithick]
        \tikzstyle{post} = [->,shorten <=1pt,>=stealth',semithick]
        \begin{tikzpicture}[node distance=1.1cm]
          \node [] (0) {\vdots};
          \node [loc] (a) [below of=0] {$a$}
            edge [pre] node [label=right:\texttt{i:=1}] {} (0);
          \node [loc] (b) [below of=a] {$b$}
            edge [pre] node [label=left:\texttt{i<n}] {} (a);          
          \node [loc] (c) [below of=b] {$c$}
            edge [pre] node [label=left:\texttt{j:=i}] {} (b);
          \node [loc] (d) [below of=c,yshift=3mm,xshift=18mm] {$d$}
            edge [pre,in=0,out=120] node [label=above:\texttt{j<=0}~~~~~] {} (c)
            edge [post,out=90, in=330] node [label=right:\texttt{i:=i+1}] {} (a);
          \node [loc] (e) [below of=c,yshift=-5mm] {$e$}
            edge [pre] node [label=left:\texttt{j>0}] {} (c)
            edge [post,out=0,in=240] node [label=below:~~~~~~~\texttt{A[j-1]<=A[j]}] {} (d);
          \node [loc] (f) [below of=e] {$f$}
            edge [pre] node [label=left:\texttt{A[j-1]>A[j]}] {} (e);
          \node [loc] (g) [below of=f] {$g$}
            edge [pre] node [label=right:\texttt{swap(A[j-1],A[j])}] {} (f)        
            edge [post,out=180,in=180,looseness=1.8,overlay]
              node [overlay,near start,below] {\texttt{j:=j-1}~~~~~~~~~~} (c);
           \node [] (00) [below of=a,xshift=32mm] {\vdots~~}
            edge [pre,in=0,out=120,looseness=0.8] node
            [label=above:~~\texttt{i>=n}] {} (a);
        \end{tikzpicture}
      \end{tabular}
    \end{tabular}
  \end{center}
  \caption{A source code and a flowgraph of two nested program loops
    (insertion sort). The flowgraph has seven independent cycles with cores
    \textit{abcda}, \textit{abceda}, \textit{cdabc}, \textit{cedabc},
    \textit{cefgc}, \textit{dabcd}, and \textit{dabced}.}
  \label{fig:cse:LoopsCycles2}
\end{figure}



\section{Proofs of Theorems}
\label{sec:Proofs}

\setcounter{lem}{0}
\setcounter{thm}{0}

Let $\varphi$, $\varphi'$ be symbolic expressions, $\theta$, $\theta'$ be
symbolic memories, and $s$, $s'$ be program states. Then $\varphi \equiv
\varphi'$, if $\varphi$ and $\varphi'$ are either logically equivalent
formulae or two terms such that $\varphi = \varphi'$ is valid. $\theta
\equiv \theta'$, if for each variable \var{a} we have $\theta(\var{a})
\equiv \theta'(\var{a})$. Finally, recall that $s \equiv s'$, if both $s$
and $s'$ have equal or equivalent components. Now we formulate and prove one
auxiliary lemma that will be used in the subsequent proofs.

\begin{lem}[Equivalent Compositions]
  Let $s$, $s'$, and $s''$ be program states, $\bld{\nu}$ and $\bld{\nu}'$
  be valuations of parameters in $s$ and $s'$ respectively such that
  $\bld{\nu} \cup \bld{\nu}'$ is also a valuation, $\theta$, $\theta'$, and
  $\theta''$ be symbolic memories, and $\varphi$ and $\psi \wedge \psi'$ be
  formulae. The following relations hold:
  \begin{enumerate}
  \item $(\theta \compose \theta')\ese{\varphi} \equiv
    \theta\ese{\theta'\ese{\varphi}}$
  \item $\theta\ese{\psi} \wedge \theta\ese{\psi'} \equiv \theta\ese{\psi
      \wedge \psi'}$
  \item $\theta \compose (\theta' \compose \theta'') \equiv (\theta \compose
    \theta') \compose \theta''$
  \item $s \compose (s' \compose s'') \equiv (s \compose s') \compose s''$
  \item $s\prm{\bld{\nu}} \compose s'\prm{\bld{\nu}'} \equiv (s \compose
    s')\prm{\bld{\nu} \cup \bld{\nu}'}$
  \end{enumerate}
\end{lem}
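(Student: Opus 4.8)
The plan is to prove the five items in the given order, since each later item reduces to earlier ones together with the definitions of $\ese{\cdot}$ and $\compose$. Items~1 and~2 are the substitution facts from which everything else follows, so I would establish them first.

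For item~1, I would argue by structural induction on $\varphi$, viewing it as a term/formula tree built from symbols $\sym{a}$, constants, and arithmetic/logical connectives. In the base case $\varphi=\sym{a}$, the left-hand side rewrites $\sym{a}$ to $(\theta\compose\theta')(\var{a})=\theta\ese{\theta'(\var{a})}$ by the definition of $\compose$, while the right-hand side first rewrites $\sym{a}$ to $\theta'(\var{a})$ and then applies $\theta\ese{\cdot}$, giving exactly $\theta\ese{\theta'(\var{a})}$; constants are fixed by both sides. The inductive step is immediate because $\ese{\cdot}$ acts homomorphically on the connectives, so both sides commute with the top-level operation. Item~2 is precisely this homomorphism property specialised to the connective $\wedge$, and needs no separate induction.

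Items~3 and~4 are then pure bookkeeping. For item~3 I would evaluate both composed memories at an arbitrary variable $\var{a}$: the left side is $\theta\ese{\theta'\ese{\theta''(\var{a})}}$ and the right side is $(\theta\compose\theta')\ese{\theta''(\var{a})}$, and these are equivalent by item~1 applied with $\varphi=\theta''(\var{a})$. For item~4 I would write $s=(l,\theta,\varphi)$, $s'=(l',\theta',\varphi')$, $s''=(l'',\theta'',\varphi'')$ and compare the three components of the two states. The location is $l''$ on both sides, the memory component is handled by item~3, and for the path condition I would expand both sides using the definition of state composition, then reduce $\theta\ese{\theta'\ese{\varphi''}}$ to $(\theta\compose\theta')\ese{\varphi''}$ via item~1 and distribute $\theta\ese{\cdot}$ over $\wedge$ via item~2; after reassociating the conjunction (which preserves logical equivalence) the two path conditions coincide.

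The main obstacle is item~5, where I must commute a parameter valuation with composition. The key auxiliary observation is that parameters $\kappa$ and symbols $\sym{a}$ are disjoint syntactic categories, so substituting a parameter by a constant neither creates nor destroys any symbol; hence the parameter substitution $\prm{\bld{\nu}}$ commutes with the symbol substitution $\theta\ese{\cdot}$. Concretely, I would show for each variable $\var{a}$ that $(\theta\prm{\bld{\nu}}\compose\theta'\prm{\bld{\nu}'})(\var{a})=\theta\prm{\bld{\nu}}\ese{\theta'\prm{\bld{\nu}'}(\var{a})}$ equals $((\theta\compose\theta')\prm{\bld{\nu}\cup\bld{\nu}'})(\var{a})$, using that the parameters carried in from $\theta$ lie in the domain of $\bld{\nu}$, those carried in from $\theta'$ lie in the domain of $\bld{\nu}'$, and that the hypothesis that $\bld{\nu}\cup\bld{\nu}'$ is a valuation guarantees the two agree on any shared parameter, so the combined substitution is well defined. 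The path-condition component is treated in the same way, additionally invoking items~1 and~2 exactly as in item~4. The care needed here is purely in tracking which occurrences of which parameters come from which state and in justifying the commutation of the two kinds of substitution.
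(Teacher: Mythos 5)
Your proposal is correct and follows essentially the same route as the paper's proof: items are established in the same order, items~3 and~4 reduce componentwise to items~1 and~2, and item~5 rests on the same two observations (symbols and parameters are disjoint syntactic categories, and the hypothesis that $\bld{\nu} \cup \bld{\nu}'$ is a valuation forces agreement on shared parameters). The only cosmetic difference is that you verify item~1 by structural induction, while the paper argues directly that the two-step substitution collapses into one simultaneous substitution of $\sym{a}$ by $\theta\ese{\theta'(\var{a})}$ — the same fact, formalised slightly differently.
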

\begin{proof}
1. The expression $(\theta \compose
 \theta')\ese{\varphi}$ simultaneously substitutes each symbol
 $\sym{a}$ in $\varphi$ by a symbolic expression $(\theta \compose
 \theta') \ese{\theta_I^{-1}(\sym{a})} =
 \theta\ese{\theta'(\theta_I^{-1}(\sym{a}))}$ This follows directly from
 the definition of $\compose$. In the expression
 $\theta\ese{\theta'\ese{\varphi}}$, there we have to apply the
 substitution twice. First we simultaneously substitute each symbol
 $\sym{a}$ in $\varphi$ by a symbolic expression
 $\theta'(\theta_I^{-1}(\sym{a}))$. If the resulting formula contains
 any symbol, then it must necessarily lie in some of the substituted
 expressions $\theta'(\theta_I^{-1}(\sym{a}))$. Therefore, it is
 sufficient to apply the second substitution only to the substituted
 expressions $\theta'(\theta_I^{-1}(\sym{a}))$. In other words, it is
 sufficient to apply only one simultaneous substitution of symbols
 $\sym{a}$ in $\varphi$ by symbolic expressions
 $\theta\ese{\theta'(\theta_I^{-1}(\sym{a}))}$.

2. The equivalence is obviously valid, since the operation
 $\theta\ese{\cdot}$ only applies symbol  substitutions inside formulae
 $\psi$ and $\psi'$.

 3. According to the definition of $\compose$ we have for each variable
 \var{a} the following: $(\theta \compose (\theta' \compose
 \theta''))(\var{a}) \equiv \theta\ese{(\theta' \compose \theta'')(\var{a})}
 \equiv \theta\ese{\theta'\ese{\theta''(\var{a})}} \equiv (\theta \compose
 \theta')\ese{\theta''(\var{a})} \equiv ((\theta \compose \theta') \compose
 \theta'')(\var{a})$.

4. Let $s$,
 $s'$, and $s''$ be program states $(l, \theta, \varphi)$,
 $(l', \theta', \varphi')$, and $(l'', \theta'', \varphi'')$
 respectively. According to the definition of $\compose$, we 
 have $s \compose (s' \compose s'') = (l'', \theta \compose (\theta' \compose
 \theta''),~\varphi \wedge \theta\ese{\varphi' \wedge
 \theta'\ese{\varphi''}})$ and $(s \compose s') \compose s'' = (l'', (\theta \compose
 \theta') \compose \theta'',~(\varphi \wedge \theta\ese{\varphi'}) \wedge
 (\theta \compose \theta')\ese{\varphi''})$. We prove the
 equivalence of the last components. According to points 1.
 and 2. we have $(\varphi \wedge \theta\ese{\varphi'}) \wedge (\theta
 \compose \theta')\ese{\varphi''} \equiv \varphi \wedge
 \theta\ese{\varphi'} \wedge \theta\ese{\theta'\ese{\varphi''}} \equiv
 \varphi \wedge \theta\ese{\varphi' \wedge \theta'\ese{\varphi''}}$.

5. The equivalence follows from
 these two facts: (a) The composition of states operates on symbols,
 while the parameter substitution operates on parameters. (b)
 $\bld{\nu} \cup \bld{\nu}'$ is supposed to be a valuation. Therefore,
 if there is $(\kappa, \nu) \in \bld{\nu}$ and $(\kappa, \nu') \in
 \bld{\nu}'$, then $\nu = \nu'$ must be valid.
\qed
\end{proof}

\begin{thm}[Template Properties]\label{thm:l1l2}
  Let $T$ be a classic symbolic execution tree of $P$ and let
  $\big(e,\{(l_1,\theta_1\prm{\kappa},\varphi_1\prm{\kappa}), \ldots,
  (l_n,\theta_n\prm{\kappa},\varphi_n\prm{\kappa})\}\big)$ be a template for a
  cycle $(\pi,e,X)$ in $P$ produced by Algorithm~\ref{alg:template}. Then
  the following two properties hold:
  \begin{itemize}
  \item[(L1)] For each path $\pi = u \omega$ in $T$ leading from a
    node $u$ satisfying $u.l = e$ to a leaf, there is a node $w$ of
    $\omega$, an index $i \in \{ 1, \ldots, n \}$, and an integer $\nu \geq
    0$ such that $w.s \equiv u.s \compose
    (l_i,\theta_i\prm{\nu},\varphi_i\prm{\nu})$.
  \item[(L2)] For each node $u$ of $T$, an index $i \in \{ 1, \ldots, n \}$,
    and an integer $\nu\geq 0$ such that $u.l = e$ and $(u.\varphi
    \wedge u.\theta\ese{\varphi_i\prm{\nu}})$ is satisfiable, there is a
    successor $w$ of $u$ in $T$ such that $w.s \equiv u.s \compose
    (l_i,\theta_i\prm{\nu},\varphi_i\prm{\nu})$.
  \end{itemize}
\end{thm}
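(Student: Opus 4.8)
The plan is to reduce both (L1) and (L2) to a single \emph{iteration lemma} relating the parametric pair $(\theta_*\prm{\nu},\varphi_*\prm{\nu})$ produced by Algorithm~\ref{alg:template} to the semantic effect of $\nu$ genuine traversals of the core $\pi$ from fresh symbols. Writing $\theta^{(0)}=\theta_I$ and $\theta^{(\nu)}=\theta^{(\nu-1)}\compose\theta$ for the $\nu$-fold iterated memory, where $\theta$ is the single-iteration memory computed at line~\ref{alg:template:begin}, and $\Phi_\nu=\bigwedge_{\tau=0}^{\nu-1}\theta^{(\tau)}\ese{\varphi}$ for the path condition accumulated over $\nu$ iterations, I would prove $\theta_*\prm{\nu}\equiv\theta^{(\nu)}$ and $\varphi_*\prm{\nu}\equiv\Phi_\nu$ by induction on $\nu$. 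The memory part is the delicate case analysis over the three branches of the fixpoint loop: for the arithmetic and geometric cases I verify the closed forms $\sym{a}+\nu c$ and $\sym{a}\cdot c^\nu$ against $\theta^{(\nu)}(\var{a})$ directly, and for the $\ite$ branch the identity $\theta^{(\nu)}(\var{a})=\theta^{(\nu-1)}\ese{g}$ matches the recursion $\theta_*\prm{\nu}(\var{a})=\theta_*\prm{\nu-1}\ese{g}$ exactly under the induction hypothesis. The path-condition part then follows since, for concrete $\nu$, the quantified $\varphi_*\prm{\nu}$ unfolds to $\bigwedge_{\tau=0}^{\nu-1}\theta_*\prm{\tau}\ese{\varphi}$, equivalent to $\Phi_\nu$ by the memory equivalence. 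The upshot is that the state $z_\nu$ reached by $\nu$ core iterations from $(e,\theta_I,\true)$ satisfies $z_\nu\equiv(e,\theta_*\prm{\nu},\varphi_*\prm{\nu})$.

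Next I would record a purely definitional composition identity. For an exit edge $(u_j,x_j)$, let $(x_j,\theta',\varphi')=\texttt{executePath}(P,\rho x_j)$ be the isolated exit state used at line~\ref{alg:template:extendM}; then, by the definition of $\compose$ on states, the corresponding template element is exactly $(e,\theta_*\prm{\kappa},\varphi_*\prm{\kappa})\compose(x_j,\theta',\varphi')$. Combining this with the iteration lemma and associativity of $\compose$ (point~4 of the Equivalent Compositions lemma), the state produced in $T$ by starting at $u$, performing $\nu$ iterations, and then taking the exit edge equals, up to $\equiv$, $u.s\compose(l_i,\theta_i\prm{\nu},\varphi_i\prm{\nu})$. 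This single computation is the common core of both directions.

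For (L1) I would argue that a finite path from $u$ to a leaf cannot remain on core locations forever: the core nodes are pairwise distinct and every core location has an exhaustive set of out-edges, so the path must at some point take an exit edge. Taking $w$ to be the node immediately after the \emph{first} such exit edge $(u_j,x_j)$ and $\nu$ the number of complete iterations preceding it, the composition identity yields $w.s\equiv u.s\compose(l_i,\theta_i\prm{\nu},\varphi_i\prm{\nu})$ for the index $i$ with $l_i=x_j$. I must also check that the template actually contains this element: since $w$ lies in $T$ its path condition is satisfiable, which forces $\varphi'$ to be satisfiable (an instance of an unsatisfiable formula is unsatisfiable), so the \texttt{SAT} test at line~\ref{alg:template:satCheck2b} succeeded and the element was added.

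For (L2) I would observe that the hypothesis that $u.\varphi\wedge u.\theta\ese{\varphi_i\prm{\nu}}$ is satisfiable is precisely the statement that the path condition of $u.s\compose(l_i,\theta_i\prm{\nu},\varphi_i\prm{\nu})$ is satisfiable. Distributing $u.\theta\ese{\cdot}$ over the conjunction (point~2 of the lemma) exhibits this formula as the full conjunction of branch conditions along the $\nu$-iterations-plus-exit path; since every prefix-conjunction of a satisfiable conjunction is satisfiable, each intermediate node of that path is feasible and hence present in $T$, giving the desired successor $w$, to which the composition identity again applies. The main obstacle is the iteration lemma: reconciling the syntactic output of the fixpoint loop — the closed-form progressions, the $\ite$-recursion, the variable-definition ordering, and the quantified form of $\varphi_*$ — with the semantic iterated composition, while keeping track of the fact that all satisfiability tests are assumed to be decided precisely (no \texttt{UNKNOWN}), which is what lets feasibility in $T$ and feasibility in the isolated analysis coincide.
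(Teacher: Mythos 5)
Your proposal is correct and takes essentially the same route as the paper's own proof: both relate the algorithm's parametric pair $(\theta_*\prm{\kappa},\varphi_*\prm{\kappa})$ to the $\nu$-fold iterated composition $\theta^{\nu}$ and the accumulated conjunction of iteration conditions, unfold the universal quantifier at a concrete $\nu$, read the template element as the composition of $(e,\theta_*\prm{\kappa},\varphi_*\prm{\kappa})$ with the isolated exit state, and prove (L2) by showing the hypothesis formula is exactly the classic path condition of the iterate-$\nu$-times-then-exit path, so that path must appear in $T$. Your only departures are added rigor at spots the paper leaves implicit (the explicit induction for $\theta_*\prm{\nu}\equiv\theta^{\nu}$, the argument that a finite leaf-bound path must eventually take an exit edge, and the verification that the relevant template element passed the satisfiability test at line~\ref{alg:template:satCheck2b}), which strengthen rather than change the argument.
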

\begin{proof}
  We start with (L1). Let $T$ be a classic symbolic execution tree of $P$,
  $(e, \theta, \varphi)$ be a symbolic state computed at line
  \ref{alg:template:begin}, and $(x, \hat{\theta}, \hat{\varphi})$ be a
  symbolic state computed at line \ref{alg:template:exitState} (for some
  exit edge from $X$). Further, let $u$ be a node of $T$ such that $u.l =
  e$, and let $\delta = u_0 \ldots u_1 \ldots u_2$ $\ldots u_{\nu} \ldots w$
  be a path in $T$ starting at $u$ (i.e.~$u_0 = u$), then iterating the core
  $\pi$ exactly $\nu \geq 0$ times (i.e.~all the nodes $u_i \in \delta$ are
  exactly those having $u_i.l = e$), and then $\delta$ finally leaves the
  core $\pi$ by following the path towards the node $w$, satisfying $w.l =
  x$. We use the memory composition operation to express memories of the
  nodes $u_i$ along $\delta$ as follows.
\begin{align*}
u_1.\theta &= u.\theta \compose \theta
\\
u_2.\theta &= u_1.\theta \compose \theta = u.\theta \compose (\theta \compose
\theta)
\\
\cdots
\\
u_{\nu}.\theta &= u_{\nu-1}.\theta \compose \theta = u.\theta \compose
(\underset{\nu}{\underbrace{\theta \compose \cdots \compose \theta}}).
\end{align*}
If we denote the composition of $i$ symbolic memories $\theta$ by
$\theta^i$, where $\theta^0 = \theta_I$, $\theta^1 = \theta$, and
$\theta^i = \theta \compose \theta^{i-1}$, then we have $u_{i}.\theta =
u.\theta \compose \theta^i$ and we get
\begin{align*}
w.\theta = u.\theta \compose (\theta^{\nu} \compose \hat{\theta}).
\end{align*}
We proceed similarly to express path conditions of the nodes along
$\delta$.
\begin{align*}
u_1.\varphi &\equiv u.\varphi \wedge u.\theta\ese{\varphi} \equiv
u.\varphi \wedge (u.\theta \compose \theta^0)\ese{\varphi} \equiv
u.\varphi \wedge u.\theta\ese{\theta^0\ese{\varphi}}
\\
u_2.\varphi &\equiv u_1.\varphi \wedge u_1.\theta\ese{\varphi} \equiv
u.\varphi \wedge u.\theta\ese{\theta^0\ese{\varphi}} \wedge (u.\theta
\compose \theta^1)\ese{\varphi} \\ &\equiv u.\varphi \wedge
u.\theta\ese{\theta^0\ese{\varphi} \wedge \theta^1\ese{\varphi}}
\\
\cdots
\\
u_{\nu}.\varphi &\equiv u_{\nu-1}.\varphi \wedge
u_{\nu-1}.\theta\ese{\varphi} \equiv u.\varphi \wedge
u.\theta\ese{\underset{\nu}{\underbrace{\theta^0\ese{\varphi} \wedge
\ldots \wedge \theta^{\nu-1}\ese{\varphi}}}}.
\end{align*}
Using the following equivalence
\[\theta^0\ese{\varphi} \wedge \ldots \wedge
\theta^{\nu-1}\ese{\varphi} \equiv 0 \leq \nu \wedge \forall \tau~(0
\leq \tau < \nu \rightarrow \theta^{\tau}\ese{\varphi}),\]
we can write
\begin{align*}
w.\varphi &\equiv u_{\nu}.\varphi \wedge
u_{\nu}.\theta\ese{\hat{\varphi}} \\ &\equiv u.\varphi \wedge
u.\theta\ese{\theta^0\ese{\varphi} \wedge \ldots \wedge
\theta^{\nu-1}\ese{\varphi} \wedge \theta^{\nu}\ese{\hat{\varphi}}}
\\
&\equiv u.\varphi \wedge u.\theta\ese{0 \leq \nu \wedge \forall
\tau~(0 \leq \tau < \nu \rightarrow \theta^{\tau}\ese{\varphi}) \wedge
\theta^{\nu}\ese{\hat{\varphi}}}.
\end{align*}
But SMT solvers do not support the memory composition operation appearing in
the formula $w.\varphi$. Therefore, we need an equivalent \emph{declarative}
description of the operation. Such a description is a parametrised symbolic
memory $\theta_*\prm{\kappa}$, for which we require $\theta_*\prm{\kappa}
\equiv \theta^{\kappa}$, for each $\kappa \geq 0$. For a given symbolic
memory $\theta$ we compute a content of $\theta_*\prm{\kappa}$ per variable
by applying the following rules, in which \var{a} is an integer variable,
\var{b} is any variable, $c$ is a numeric constant, and $g$ is a symbolic
expression
\begin{gather*}
\frac{\theta(\var{a}) = \theta_I(\var{a}) + c
}{\theta_*\prm{\kappa}(\var{a}) =
\theta_I(\var{a}) + c \cdot \kappa},
~~~~~~~~~~~~~
\frac{\theta(\var{a}) = \theta_I(\var{a}) \cdot c
}{\theta_*\prm{\kappa}(\var{a}) =
\theta_I(\var{a}) \cdot c^{\kappa}},
\\ \\
\frac{\theta(\var{a}) = g,~~~\forall \theta_I(\var{b}) \in
g~.~\theta_*\prm{\kappa}(\var{b})\neq\bot}{\theta_*\prm{\kappa}(\var{a}) =
\ite(\kappa>0,\theta_*[\kappa-1]\langle g\rangle,\theta_I(\var{a}))},
\end{gather*}
Observe, that lines
\ref{alg:template:repeat:foreach:BranchAritSeq}--\ref{alg:template:repeat:foreach:BranchLast}
of Algorithm~\ref{alg:template} are nothing but implementation of the rules
above. And the implementation is placed into the \textbf{repeat-until} loop
to allow application of the rules in the right order, i.e. to maximise a
chance to express \emph{all} the variables precisely.

Having $\theta_*\prm{\kappa}$ we express the resulting program state
$(x, \theta_x\prm{\kappa}, \varphi_x\prm{\kappa})$ at the location $x$ as
\begin{align*}
  \theta_x\prm{\kappa} &= \theta_*\prm{\kappa} \compose \hat{\theta}
  \\
  \varphi_x\prm{\kappa} &= 0 \leq \kappa \wedge \forall \tau~(0 \leq \tau <
  \kappa \rightarrow \theta_*\prm{\tau}\ese{\varphi}) \wedge
  \theta_*\prm{\kappa}\ese{\hat{\varphi}}
\end{align*}
and we get $w.\theta \equiv u.\theta \compose \theta_x\prm{\nu}$, 
$w.\varphi \equiv u.\varphi \wedge u.\theta\ese{\varphi_x\prm{\nu}}$.
Observe, that the sub-formula
\[
 0 \leq \kappa \wedge \forall \tau~(0 \leq
\tau < \kappa \rightarrow \theta_*\prm{\tau}\ese{\varphi})
\]
of $\varphi_x\prm{\kappa}$ is denoted as $\varphi_*\prm{\kappa}$ in the
algorithm (see line \ref{alg:template:endMemory}). Using the above
equivalences, we write $w.s \equiv u.s \compose (x, \theta_x\prm{\nu},
\varphi_x\prm{\nu})$, which is exactly the equivalence of~(L1).

Let $u.\varphi \wedge u.\theta\ese{\varphi_x\prm{\nu}}$ be satisfiable
formula for an exit $x$ from the cycle and for a number $\nu$ of iterations
along the core $\pi$. To prove (L2) it is sufficient to show that the path
$\delta$ (defined above) is real in $P$ and therefore it appears in $T$. For
that purpose we try to compute a path condition of \emph{classic} symbolic
execution for any path in $T$ containing $\delta$ as its suffix:
\begin{align*}
&u.\varphi \wedge u.\theta\ese{\varphi_x\prm{\nu}}~\equiv\\
&u.\varphi \wedge u.\theta\ese{0 \leq \nu \wedge \forall
\tau~(0 \leq \tau < \nu \rightarrow \theta_*\prm{\tau}\ese{\varphi}) \wedge
\theta_*\prm{\nu}\ese{\hat{\varphi}}}~\equiv\\
&u.\varphi \wedge u.\theta\ese{0 \leq \nu \wedge \forall
\tau~(0 \leq \tau < \nu \rightarrow \theta^{\tau}\ese{\varphi}) \wedge
\theta^{\nu}\ese{\hat{\varphi}}}~\equiv\\
&u.\varphi \wedge
u.\theta\ese{\theta^0\ese{\varphi} \wedge \ldots \wedge
\theta^{\nu-1}\ese{\varphi} \wedge \theta^{\nu}\ese{\hat{\varphi}}}.
\end{align*}
\qed
\end{proof}

In the following two theorems we assume that $T$ and $T'$ are classic
and compact symbolic execution trees of a given program $P$ computed by
Algorithm~\ref{alg:executeSymbolically} without and with $\Box$-lines
respectively. We further assume that neither $T$ nor $T'$ contains
failed leaves.

\begin{thm}[Soundness]
For each leaf node $e \in T$ there is a leaf node $e' \in T'$ and
a valuation  $\bld{\nu}$ of parameters in $e'.s$ such that $e.s \equiv
e'.s\prm{\bld{\nu}}$.
\end{thm}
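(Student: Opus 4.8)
The plan is to prove a stronger, node-level statement by induction along the finite branch of $T$ ending at the leaf $e$, carrying in lockstep a node of $T'$ together with a valuation of its parameters. Write the root-to-$e$ branch of $T$ as $v_0, v_1, \ldots, v_m = e$. I would construct a node $v'$ of $T'$ (starting at the root) and a valuation $\nu$ of the parameters of $v'\!.s$ maintaining the invariant $v_j.s \equiv v'\!.s\prm{\nu}$, where $j$ is the current index into the classic branch. Initially $j = 0$, $v'$ is the root of $T'$, and $\nu$ is empty; both roots are labelled $(\text{start},\theta_I,\true)$, so the invariant holds. Each step strictly increases $j$, so the process terminates at $j = m$, and the compact node $v'$ reached there will be the desired $e'$ with witness $\nu$. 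I would also record the standard fact that, because the branch conditions at each flowgraph location are total, every node of $T$ at a non-terminal location has at least one satisfiable (hence present) successor; thus the leaves of $T$ are exactly the nodes whose location is an exit or error location.

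For the inductive step fix $v_j$ with $j < m$ (so $v_j.l$ is non-terminal, whence so is $v'.l$ and $v'$ is actually expanded in $T'$), together with $v'$ and $\nu$. In $T'$ the node $v'$ was expanded either by a classic step or by a template application. In the classic-step case the successors of $v'$ are the classic successors of $v'\!.s$, and $v_{j+1}$ is a classic successor of $v_j.s$ taken along an assignment or a branch with condition $\gamma$. Since a classic step only rewrites symbols and conjoins $v.\theta\ese{\gamma}$ to the path condition, and $\gamma$ contains no parameters, these operations commute with parameter substitution; hence the compact successor $w'$ taking the same edge satisfies $w'\!.s\prm{\nu} \equiv v_{j+1}.s$. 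In particular $w'\!.\varphi\prm{\nu} \equiv v_{j+1}.\varphi$ is satisfiable (as $v_{j+1}\in T$), so under the no-UNKNOWN assumption $\sat(w'\!.\varphi)=$~SAT and $w'$ is inserted into $T'$. I set $v':=w'$, keep $\nu$, and advance $j$ by one.

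In the template case $v'.l = e$ is the entry location of a template $(e,\{(l_1,\theta_1\prm{\kappa},\varphi_1\prm{\kappa}),\ldots\})$, and the invariant forces $v_j.l = e$. The suffix of the classic branch from $v_j$ to the leaf is a path from a node at location $e$ to a leaf, so property (L1) of the Template Properties theorem yields a node $w$ on it, an index $i$, and an integer $\nu_0\geq 0$ with $w.s \equiv v_j.s \compose (l_i,\theta_i\prm{\nu_0},\varphi_i\prm{\nu_0})$. Let $\kappa$ be the fresh parameter chosen for this application, so the matching compact successor is $w'$ with $w'\!.s = v'\!.s \compose (l_i,\theta_i\prm{\kappa},\varphi_i\prm{\kappa})$. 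Combining the invariant with the valuation–composition identity (part (5) of the Equivalent Compositions lemma), which applies precisely because $\kappa$ is fresh and hence disjoint from the domain of $\nu$, I obtain
\begin{align*}
w.s &\equiv v'\!.s\prm{\nu} \compose (l_i,\theta_i\prm{\nu_0},\varphi_i\prm{\nu_0})\\
&\equiv \big(v'\!.s \compose (l_i,\theta_i\prm{\kappa},\varphi_i\prm{\kappa})\big)\prm{\nu'} \;=\; w'\!.s\prm{\nu'},
\end{align*}
where $\nu' = \nu \cup \{\kappa \mapsto \nu_0\}$. As before $w'\!.\varphi\prm{\nu'} \equiv w.\varphi$ is satisfiable, so $w'$ is inserted into $T'$; I set $v':=w'$, $\nu:=\nu'$, and advance $j$ to the index of $w$, which is strictly larger than the old $j$ since $w$ lies strictly after $v_j$ on the branch.

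When the induction reaches $j = m$ we have a node $v'$ of $T'$ with $e.s \equiv v'\!.s\prm{\nu}$; since $e$ is a leaf of $T$ its location is an exit or error location, so $v'.l$ is terminal and $v'$ is a leaf of $T'$, giving $e' := v'$. The hard part will be the template case: one must match the concrete iteration count $\nu_0$ on the classic branch to the template parameter through (L1), and then convert the ``compose-then-valuate'' form into the ``valuate the compact state'' form. This conversion is exactly what the freshness of $\kappa$ and part (5) of the Equivalent Compositions lemma deliver, while the classic-step case rests on the routine, but necessary, check that ordinary symbolic steps commute with parameter substitution.
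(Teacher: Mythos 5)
Your proposal is correct and follows essentially the same route as the paper's own proof: induction along the root-to-leaf branch of $T$, maintaining a compact node and valuation with the invariant $v_j.s \equiv v'\!.s\prm{\nu}$, splitting on whether $T'$ took a classic step or a template application, and discharging the template case via property (L1) together with part (5) of the Equivalent Compositions lemma and the freshness of $\kappa$. You are in fact somewhat more explicit than the paper where it is terse — advancing the index past several classic nodes at once in the template case, noting that classic steps commute with parameter substitution, and closing with the argument that the final compact node is a leaf — but these are refinements of the same argument, not a different one.
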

\begin{proof}
Let $\pi$ be the path in $T$ from the root to the leaf node $e$. We
prove the theorem by the following induction:

\emph{Basic case}: The root nodes $r$ and $r'$ of $T$ and $T'$
respectively are labelled by the same program state $s_0$ (see lines
\ref{alg:executeSymbolically:InitialState} and
\ref{alg:executeSymbolically:Root} of
Algorithm~\ref{alg:executeSymbolically}). Therefore, $r.s \equiv
r'.s\prm{\bld{\nu}}$, for $\bld{\nu} = \emptyset$.

\emph{Inductive step}: Let $u \in \pi$, $u \neq e$, $u'$ be a node
of $T'$, and $\bld{\nu}$ be a valuation such that $u.s \equiv
u'.s\prm{\bld{\nu}}$. We show, there is a successor $w$ of $u$ in
$\pi$, a successor node $w'$ of $u'$ in $T'$, and a valuation
$\bld{\nu}'$ such that $w.s \equiv w'.s\prm{\bld{\nu}'}$. There are
two possible cases in Algorithm~\ref{alg:executeSymbolically} for
$u'.s$:

(1) We reach line~\ref{alg:executeSymbolically:Unrolling}: According to
Theorem~\ref{thm:l1l2}~(L1), there is a successor node $w$ of $u$ in $\pi$,
a triple $(l_i,\theta_i\prm{\kappa},\varphi_i\prm{\kappa})$ of the second
element of the applied template $t$, and a non-negative integer $\nu$ for
$\kappa$ such that
\begin{align*}
w.s &\equiv u.s \compose (l_i, \theta_i\prm{\kappa},\varphi_i\prm{\kappa})\prm{\{(\kappa,\nu)\}}
\\
&\equiv u'.s\prm{\bld{\nu}} 
  \compose(l_i,\theta_i\prm{\kappa},\varphi_i\prm{\kappa})\prm{\{(\kappa,\nu)\}}
\\
&\equiv (u'.s \compose 
  (l_i, \theta_i\prm{\kappa},\varphi_i\prm{\kappa}))\prm{\bld{\nu} \cup \{(\kappa,\nu)\} }
\\
&\equiv s'\prm{\bld{\nu}'},
\end{align*}
where $\{(\kappa,\nu)\}$ denotes a valuation assigning to $\kappa$ the
non-negative integer $\nu$, and $s'$ is the $i$-th direct successor of
$u'.s$ computed at
line~\ref{alg:executeSymbolically:SuccessorInUnrolling}. And since $w \in
T$, we have $w.\varphi$ is satisfiable. Therefore, there is a direct
successor $w'$ of $u'$ in $T'$ with $w'.s = s'$.

(2) Otherwise, we reach
line~\ref{alg:executeSymbolically:ClassicStep}: Since $u.s \equiv
u'.s\prm{\bld{\nu}}$ and we apply classic symbolic execution step for
$u'.s$, there must be a direct successor $w$ of $u$ and a direct
successor $w'$ of $u'$ such that $w.s \equiv w'.s\prm{\bld{\nu'}}$,
where $\bld{\nu'} = \bld{\nu}$.
\qed
\end{proof}

\begin{thm}[Completeness]
For each leaf node $e' \in T'$ there is a leaf node $e \in T$ and
a valuation $\bld{\nu}$ of parameters in $e'.s$ such that $e.s \equiv
e'.s\prm{\bld{\nu}}$.
\end{thm}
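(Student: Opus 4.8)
The plan is to mirror the proof of the Soundness theorem, but to walk the path in the \emph{compact} tree $T'$ instead of the classic one and to invoke property~(L2) of Theorem~\ref{thm:l1l2} wherever Soundness used~(L1). First I would fix, once and for all, a single valuation $\bld{\nu}$ of the parameters occurring in $e'.s$: since $e' \in T'$, Algorithm~\ref{alg:executeSymbolically} inserted it only because $\sat(e'.\varphi) =$~SAT, so there is an assignment of the parameters to non-negative integers together with values of the symbols under which $e'.\varphi$ holds, and I take $\bld{\nu}$ to be the induced parameter assignment. The crucial monotonicity observation is that along any root-to-leaf path the path conditions grow only by conjunction, so $e'.\varphi$ logically implies $w'.\varphi$ for every ancestor $w'$ of $e'$; hence satisfiability of $e'.\varphi\prm{\bld{\nu}}$ over the symbols forces $w'.\varphi\prm{\bld{\nu}}$ to be satisfiable as well, for every node $w'$ on the path from the root of $T'$ to $e'$.

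I would then induct along that path $\pi' = r' \ldots e'$, maintaining the invariant that for the current node $u'$ there is a node $u$ of $T$ with $u.s \equiv u'.s\prm{\bld{\nu}}$ (where $\bld{\nu}$ is tacitly restricted to the parameters present in $u'.s$). The base case matches the two roots, which carry the same parameter-free state $s_0$. For the step, let $w'$ be the successor of $u'$ on $\pi'$, and split on which branch of the algorithm produced it. If $w'$ came from a classic step (line~\ref{alg:executeSymbolically:ClassicStep}), then because $u.s \equiv u'.s\prm{\bld{\nu}}$ the same instruction applies at $u$, and the matching branch is feasible since its instance $w'.\varphi\prm{\bld{\nu}}$ is satisfiable by the monotonicity remark; thus $T$ contains a successor $w$ of $u$ with $w.s \equiv w'.s\prm{\bld{\nu}}$, and the valuation is unchanged. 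If $w'$ came from a template application (line~\ref{alg:executeSymbolically:SuccessorInUnrolling}), then $w'.s = u'.s \compose (l_i,\theta_i\prm{\kappa},\varphi_i\prm{\kappa})$ for the fresh parameter $\kappa$, and I set $\nu = \bld{\nu}(\kappa)$. Since $u'.l = e$ and $u.l = u'.l$, and since $w'.\varphi\prm{\bld{\nu}} \equiv u.\varphi \wedge u.\theta\ese{\varphi_i\prm{\nu}}$ is satisfiable (using $u.s \equiv u'.s\prm{\bld{\nu}}$ and the Equivalent Compositions lemma to push the parameter substitution through $\theta\ese{\cdot}$), property~(L2) of Theorem~\ref{thm:l1l2} yields a successor $w$ of $u$ in $T$ with $w.s \equiv u.s \compose (l_i,\theta_i\prm{\nu},\varphi_i\prm{\nu})$. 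Points~4 and~5 of the Equivalent Compositions lemma then rewrite this as $(u'.s \compose (l_i,\theta_i\prm{\kappa},\varphi_i\prm{\kappa}))\prm{\bld{\nu} \cup \{(\kappa,\nu)\}} = w'.s\prm{\bld{\nu}}$, re-establishing the invariant, exactly as in Soundness Case~(1).

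Finally I would argue that the node $e$ reached in $T$ is genuinely a \emph{leaf}. Throughout the induction the matched locations agree, $w.l = w'.l$ in both cases, since template triples carry the exit location $l_i$ and classic steps preserve locations, so $e.l = e'.l$. The node $e'$ is a leaf of $T'$ precisely because $e'.l$ is an exit or error location of $P$, at which the algorithm performs no expansion; the same holds at $e$, making it a leaf of $T$ with $e.s \equiv e'.s\prm{\bld{\nu}}$.

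I expect the main obstacle to be the bookkeeping of which parameters belong to which prefix of $\pi'$, together with the careful justification that the single witnessing valuation $\bld{\nu}$ for $e'.\varphi$ simultaneously witnesses satisfiability of every intermediate $w'.\varphi$ — that is, making the monotonicity argument precise in the presence of both the symbol substitution $\theta\ese{\cdot}$ and the parameter substitution $\prm{\bld{\nu}}$, and checking that these two substitutions commute. Once this satisfiability premise is discharged, the template case reduces cleanly to a single application of~(L2), and the classic case is routine.
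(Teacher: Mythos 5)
Your proposal is correct in substance and its engine is the same as the paper's: an induction along the root-to-leaf path $\pi'$ in $T'$, a case split between classic steps (line~\ref{alg:executeSymbolically:ClassicStep}) and template applications (line~\ref{alg:executeSymbolically:SuccessorInUnrolling}), an appeal to property~(L2) of Theorem~\ref{thm:l1l2} in the template case, and points~4 and~5 of the Equivalent Compositions lemma to rewrite $u.s \compose (l_i,\theta_i\prm{\nu},\varphi_i\prm{\nu})$ as $\big(u'.s \compose (l_i,\theta_i\prm{\kappa},\varphi_i\prm{\kappa})\big)\prm{\bld{\nu}\cup\{(\kappa,\nu)\}} = w'.s\prm{\bld{\nu}}$ --- that chain of equivalences is literally the one in the paper. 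Where you genuinely diverge is the valuation bookkeeping. The paper maintains, at each node $u'$ of $\pi'$, a non-empty \emph{set} $U$ of nodes of $T$ covering \emph{every} valuation $\bld{\nu}$ under which $u'.\varphi\prm{\bld{\nu}}$ is satisfiable, and in the template case decomposes an arbitrary such valuation as $\bld{\nu}' \cup \{(\kappa,\nu)\}$; only implicitly, node by node, does it use the fact that $w'.\varphi$ entails $u'.\varphi$. You instead fix one witnessing valuation at the leaf $e'$ (extracted from $\sat(e'.\varphi)=\mathrm{SAT}$) and thread a \emph{single} matched node through the induction, discharging every intermediate satisfiability premise at once by the global monotonicity observation that path conditions grow only by conjunction. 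Your version is leaner --- no sets, no decomposition of arbitrary valuations --- at the price of exactly the bookkeeping you flag yourself: every parameter of a prefix state must still occur in $e'.\varphi$ (true, because each template application contributes the persistent conjunct $\kappa \geq 0$), and parameter substitution $\prm{\bld{\nu}}$ must commute with the symbol substitution $\ese{\cdot}$ (true, since parameters and symbols are disjoint, which is fact~(a) in the proof of point~5 of the lemma). The paper's heavier invariant buys a marginally stronger by-product --- every satisfying instance of every node on the branch is realized somewhere in $T$ --- which the theorem does not actually need.

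One caution about your final paragraph, which goes beyond the paper: the paper's induction stops at $e'$ and never argues that the matched node of $T$ is a leaf. Your argument fills this, but it rests on the claim that every leaf of $T'$ sits at an exit or error location, which Algorithm~\ref{alg:executeSymbolically} does not strictly guarantee: a node becomes a childless leaf whenever \emph{all} states in $S$ have unsatisfiable path conditions, and after a template application this can happen at an interior location (e.g.\ a cycle entered under a memory for which no exit instance is feasible for any $\kappa \geq 0$), in which case the corresponding node of $T$ roots an infinite branch and is not a leaf. Since the paper's own proof is silent on this corner case, your proposal matches --- and on the leafness point slightly exceeds --- what the paper establishes.
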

\begin{proof}
Let $\pi'$ be the path in $T'$ from the root to the leaf node $e'$.
We prove the theorem by the following induction:

\emph{Basic case}: The root nodes $r$ and $r'$ of $T$ and $T'$
respectively are labelled by the same program state $s_0$ (see lines
\ref{alg:executeSymbolically:InitialState} and
\ref{alg:executeSymbolically:Root}). Let us construct a non-empty set
$U$ of nodes of $T$ such that for each valuation $\bld{\nu}$ of
parameters in $r'.s$ such that $r'.\varphi\prm{\bld{\nu}}$ is
satisfiable, there is $u \in U$ such that $u.s \equiv
r'.s\prm{\bld{\nu}}$. Obviously $U = \{ r \}$, because $r'.\varphi$
contains no parameter (so $r.s \equiv r'.s\prm{\bld{\nu}}$, for each
$\bld{\nu})$.

\emph{Inductive step}:  Let $u' \in \pi'$, $u' \neq e'$ and $U$ be a
non-empty set of nodes of $T$ such that for each valuation
$\bld{\nu}$ of parameters in $u'.s$ such that
$u'.\varphi\prm{\bld{\nu}}$ is satisfiable, there is $u \in U$ such
that $u.s \equiv u'.s\prm{\bld{\nu}}$. We show, there is a successor
$w'$ of $u'$ in $\pi'$ and a non-empty set $W$ of nodes of $T$ such
that for each valuation $\bld{\nu}'$ of parameters in $w'.s$ such that
$w'.\varphi\prm{\bld{\nu}'}$ is satisfiable, there is $w \in W$ such
that $w.s \equiv w'.s\prm{\bld{\nu}'}$. And we further show that each
$w \in W$ is a successor of some $u \in U$. There are two possible
cases in Algorithm~\ref{alg:executeSymbolically} for $u'.s$:

(1) We reach line~\ref{alg:executeSymbolically:Unrolling}: Let $w'$ be a
direct successor of $u'$ in $\pi'$. Obviously, $w'.s$ is one of the states
$s'$ computed at
line~\ref{alg:executeSymbolically:SuccessorInUnrolling}. Let $i$ be the
index, for which $w'.s = s'$. The formula $w'.\varphi$ is satisfiable, since
$w'$ is in $T'$ (see condition at
line~\ref{alg:executeSymbolically:AddSATSuccessors}). Let $\bld{\nu}$ be a
valuation for which $w'.\varphi$ is satisfiable. And let $\bld{\nu}' =
\bld{\nu} \smallsetminus \{ (\kappa,\nu) \}$, where $\nu$ is an integer
assigned in $\bld{\nu}$ to the fresh parameter $\kappa$ introduced at
line~\ref{alg:executeSymbolically:UnrollingFreshKappa}. From
line~\ref{alg:executeSymbolically:SuccessorInUnrolling} we see that
$u'.\varphi\prm{\bld{\nu}'}$ is satisfiable. Therefore, there is a node $u
\in U$ such that $u.s \equiv u'.s\prm{\bld{\nu}'}$.  According to
Theorem~\ref{thm:l1l2}~(L2) there is a successor $w$ of $u$ in $T$ such that
\begin{align*}
w.s &\equiv u.s \compose (l_i, \theta_i\prm{\kappa},
\varphi_i\prm{\kappa})\prm{\{(\kappa,\nu)\}}
\\
&\equiv u'.s\prm{\bld{\nu}'} \compose (l_i, \theta_i\prm{\kappa},
\varphi_i\prm{\kappa})\prm{\{(\kappa,\nu)\}}
\\
&\equiv (u'.s \compose (l_i, \theta_i\prm{\kappa},
\varphi_i\prm{\kappa}))\prm{\bld{\nu}}
\\
&\equiv w'.s\prm{\bld{\nu}}.
\end{align*}
Therefore, $w \in W$.

(2) Otherwise, we reach
line~\ref{alg:executeSymbolically:ClassicStep}: Let $u$ be any node
in $U$. Since $u.s \equiv u'.s\prm{\bld{\nu}}$ for some valuation
$\bld{\nu}$ for which $u'.\varphi\prm{\bld{\nu}}$ is satisfiable, and
since all direct successors of both $u$ and $u'$ are computed by
classic symbolic execution step, there must be a direct successor $w$
of $u$ in $T$ and a direct successor $w'$ of $u'$ in $T'$ such that
$w.s \equiv w'.s\prm{\bld{\nu'}}$, where $\bld{\nu'} = \bld{\nu}$.
Therefore, $w \in W$.
\qed
\end{proof}


\section{Many Cycles in Experimental Results}
\label{sec:ManyDetected}

Table~\ref{tab:Expriments} shows surprisingly high numbers of cycles
detected in benchmarks and relatively low numbers of computed templates.
This discrepancy can be easily explained. 

Our experimental tool first translates a given source code into a LLVM byte
code and the byte code is then translated into a flowgraph.  LLVM has an
instruction \texttt{icmp} to evaluate equality or inequality predicates.
For example, the line of LLVM code depicted in Figure~\ref{fig:cse:icmp_rep}
assigns the result of the comparison \texttt{a != 0} to~\texttt{c}. This
instruction is translated into a flowgraph depicted also in
Figure~\ref{fig:cse:icmp_rep}.

\begin{figure}[!htb]
  \begin{center}
    \begin{tabular}{ccc}
      \begin{tabular}{c}
        \texttt{\%c = icmp ne i32 \%a, 0}
      \end{tabular}
      & ~~~~~~~~~~~~~~ &
      \begin{tabular}{c}
        \tikzstyle{loc} = [circle,thick,draw,minimum size=4mm]
        \tikzstyle{pre} = [<-,shorten <=1pt,>=stealth',semithick]
        \footnotesize
        \begin{tikzpicture}[node distance=1.0cm]    
          \node [loc] (1) {};
          \node [loc] (2) [below of=1,xshift=-12mm] {}
            edge [pre] node [label=left:\texttt{a!=0}\,\,] {} (1);
          \node [loc] (3) [below of=1,xshift=12mm] {}
            edge [pre] node [label=right:~\texttt{a=0}] {} (1);
          \node [loc] (4) [below of=2,xshift=12mm] {}
            edge [pre] node [label=left:\texttt{c:=1}~~] {} (2)
            edge [pre] node [label=right:\texttt{c:=0}~~\,\,] {} (3);
        \end{tikzpicture}
      \end{tabular}
    \end{tabular}
  \end{center}
  \caption{A LLVM instruction \texttt{icmp} and its flowgraph representation.}
  \label{fig:cse:icmp_rep}
\end{figure}
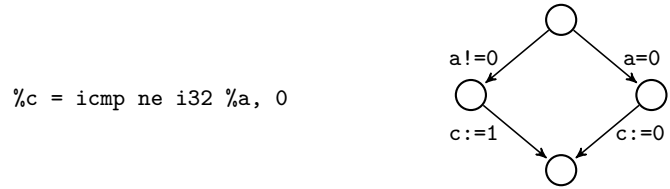

As shown in Appendix~\ref{sec:LoopsCycles}, branching structures inside
program loops lead to a high number of cycles in flowgraphs. Hence, if there
are \texttt{icmp} instructions in loops of an LLVM byte code, then we detect
many more cycles in the resulting flowgraph compared to the number of loops
in the LLVM byte code. More precisely, the number of cycles can grow
exponentially in the number of \texttt{icmp} instructions inside a program
loop. \todo{Je to pravda, ze jich je exponencialne? V zakladnim pripade ano,
  ale lze to rict takto obecne?}

However, only a few of the detected cycles are feasible in practice.
\marek{Chceme tu rozpatlavat duvody? Ve zdrojaku je jedna zakomentovana
  veta, ale ta toho vic nakousne nez dorekne. Chtelo by to pak dalsi
  obrazek. Marku, ty bys to mohl v disertacce vysvetlit i s dalsim obrazkem,
  ne?}
As templates are computed only for (provably) feasible cycles (see
line~\ref{alg:template:satCheck1} of Algorithm~\ref{alg:template}), we
usually get a relatively low number of templates.
\todo{Zkontrolujte prosim, jestli je ten posledni odstavec dobre.}


\end{document}